\pgfplotsset{compat=newest}
\colorlet{shadecolor}{gray!40}
\setlist[description]{font=\normalfont\itshape}
\newcommand{\zthreestars}{5.8K\xspace}
\newcommand{\cvcfourstars}{433\xspace}
\newcommand{\reported}{\numprint{1092}\xspace}
\newcommand{\confirmed}{819\xspace}
\newcommand{\fixed}{685\xspace}
\newcommand{\withoutopts}{489\xspace} 
\newcommand{\soundnum}{184\xspace} 
\newcommand{\invmodel}{102\xspace} 
\newcommand{\zthreeconfirmed}{578\xspace} 
\newcommand{\cvcfourconfirmed}{241\xspace} 
\newcommand{\zthreeconfirmeddefault}{316\xspace} 
\newcommand{\cvcfourconfirmeddefault}{185\xspace} 
\newcommand{\zthreesoundnum}{157\xspace} 
\newcommand{\cvcfoursoundnum}{27\xspace}
\newcommand{\crashnum}{501\xspace}
\newcommand{\zthreesoundnumdefault}{114\xspace} 
\newcommand{\cvcfoursoundnumdefault}{11\xspace}
\newcommand{\toolname}{\textsf{OpFuzz}\xspace}
\newcommand{\approachname}{{type-aware operator mutation}\xspace}
\newcommand{\setofFormulas}{\ensuremath{\texttt{Seeds}}}
\newcommand{\solverOne}{\ensuremath{S_1}}
\newcommand{\solverTwo}{\ensuremath{S_2}}
\newcommand{\triggers}{triggers}
\newcommand{\numlinespythoncode}{212} 
\newcommand{\ie}{\hbox{\emph{i.e.}}\xspace}
\newcommand{\eg}{\hbox{\emph{e.g.}}\xspace}
\newcommand{\etc}{\hbox{\emph{etc.}}\xspace}
\newcommand{\etal}{\hbox{\emph{et al.}}\xspace}
\newcommand{\zcommits}{\ensuremath{377}\xspace} 
\newcommand{\cvccommits}{\ensuremath{101}\xspace}
\newcommand{\zlocchanges}{\ensuremath{34}\xspace} 
\newcommand{\cvclocchanges}{\ensuremath{63}\xspace}
\newcommand{\ztactics}{\ensuremath{126}\xspace}
\definecolor{barblack}{HTML}{cccccc}
\definecolor{bardark}{HTML}{5e5e5e}
\definecolor{linecolor}{HTML}{575757}
\tiny\color{gray},
\begin{document}
\title[On the Unusual Effectiveness of Type-Aware Operator Mutations for ...]{On the Unusual Effectiveness of Type-Aware Operator Mutations for Testing SMT Solvers}

\author[Dominik Winterer]{Dominik Winterer} \affiliation{
  \institution{ETH Zurich} 
  \department{Department of Computer Science}
  \country{Switzerland}          
}
\email{dominik.winterer@inf.ethz.ch}          

\authornote{Both authors contributed equally to this work.}

\author[Chengyu Zhang]{Chengyu Zhang}
\affiliation{
  \institution{East China Normal University} 
  \department{Software Engineering Institute}
  \country{China}                               
}
\email{dale.chengyu.zhang@gmail.com}          

\authornotemark[1]

\author{Zhendong Su}
\affiliation{
  \institution{ETH Zurich}    
  \department{Department of Computer Science}
  \country{Switzerland}       
}
\email{zhendong.su@inf.ethz.ch}  

\begin{abstract}
We propose \approachname, a simple, but unusually effective
approach for testing SMT solvers.  The key idea is to mutate operators
of conforming types within the seed formulas to generate well-typed
mutant formulas. These mutant formulas are then used as the test cases
for SMT solvers. We realized type-aware operator mutation within the
\toolname\ tool and used it to stress-test Z3 and CVC4, two
state-of-the-art SMT solvers. Type-aware operator mutations are
unusually effective: During one year of extensive testing
with \toolname, we reported $\reported$ bugs on Z3's and CVC4's respective GitHub issue trackers, 
out of which $\confirmed$ unique bugs were confirmed
and $\fixed$ of the confirmed bugs were fixed by the developers.
The detected bugs are highly diverse --- we found bugs of many different
types (soundness bugs, invalid model bugs, crashes, \etc), logics and
solver configurations. We have further conducted an in-depth study of 
the bugs found by \toolname. The study results show that the bugs
found by \toolname are of high quality. Many of them affect core
components of the SMT solvers' codebases, and some required major
changes for the developers to fix. Among the $\confirmed$ confirmed bugs found
by \toolname, $\soundnum$ were soundness bugs, the most critical bugs 
in SMT solvers, and $\withoutopts$ were in the default modes of the solvers. 
Notably, \toolname found $\cvcfoursoundnum$ critical soundness bugs 
in CVC4, which has proved to be a very stable SMT solver.

\end{abstract}

\begin{CCSXML}
<ccs2012>
<concept>
<concept_id>10011007.10011074.10011099.10011102.10011103</concept_id>
<concept_desc>Software and its engineering~Software testing and debugging</concept_desc>
<concept_significance>500</concept_significance>
</concept>
</ccs2012>
\end{CCSXML}

\ccsdesc[500]{Software and its engineering~Software testing and debugging}
\keywords{SMT solvers, Fuzz testing, Type-aware operator mutation}

\maketitle    

\newcommand{\cvcfourlogic} {
\begin{tabular}{lrrrr}

\toprule
                     \textbf{Logic} &  \textbf{S} &  \textbf{I} &  \textbf{C} &  \textbf{Total} \\
\midrule
                                            QF\_NRA &          3 &              4 &     20 &     27 \\
                                               Set &          3 &              - &     17 &     20 \\
                                             UFLIA &          - &              2 &     16 &     18 \\
                                               NRA &          1 &              1 &     15 &     17 \\
                                                UF &          1 &              - &     16 &     17 \\
                                             QF\_BV &          1 &              - &     15 &     16 \\
                                     Uncategorized &          2 &              2 &      8 &     12 \\
                                            QF\_LIA &          2 &              - &      8 &     10 \\
                                              QF\_S &          4 &              - &      6 &     10 \\
                                               LIA &          1 &              - &      8 &      9 \\
                                             QF\_UF &          1 &              - &      8 &      9 \\
                                               LRA &          - &              - &      9 &      9 \\
                                                BV &          - &              - &      8 &      8 \\
                                            QF\_LRA &          1 &              - &      5 &      6 \\
                                            QF\_NIA &          2 &              - &      3 &      5 \\
                                             QF\_AX &          - &              - &      5 &      5 \\
                                             QF\_FP &          - &              - &      4 &      4 \\
                                       QF\_AUFBVLIA &          - &              2 &      1 &      3 \\
                                         QF\_AUFLIA &          - &              3 &      - &      3 \\
                                          QF\_UFLIA &          1 &              2 &      - &      3 \\
                                          QF\_ABVFP &          - &              - &      3 &      3 \\
                                               NIA &          1 &              - &      2 &      3 \\
                                            QF\_ABV &          - &              1 &      1 &      2 \\
                                           QF\_SLIA &          1 &              - &      1 &      2 \\
                                            UFNIRA &          - &              - &      1 &      1 \\
                                              NIRA &          - &              - &      1 &      1 \\
                                           AUFNIRA &          - &              1 &      - &      1 \\
                                              UFBV &          - &              - &      1 &      1 \\
                                          QF\_UFNRA &          - &              1 &      - &      1 \\
                                          QF\_UFLRA &          - &              - &      1 &      1 \\
                                          QF\_UFIDL &          - &              - &      1 &      1 \\
                                           QF\_NIRA &          1 &              - &      - &      1 \\
                                           QF\_ALIA &          1 &              - &      - &      1 \\
                                             UFNRA &          - &              - &      1 &      1 \\
                 \hline
            \textbf{Total} &         \textbf{27} &   \textbf{19} &    \textbf{185} &    \textbf{231} \\
\bottomrule
\end{tabular}
}

\newcommand{\zthreelogic} {
\begin{tabular}{lrrrr}
\toprule
                     \textbf{Logic} &  \textbf{S} &  \textbf{I} &  \textbf{C} &  \textbf{Total} \\
\midrule
                                              QF\_S &         42 &             27 &     43 &    112 \\
                                               NRA &         20 &              - &     44 &     64 \\
                                           QF\_SLIA &         14 &              7 &     20 &     41 \\
                                            QF\_NRA &         12 &              9 &     18 &     39 \\
                                            QF\_LIA &          2 &              6 &     25 &     33 \\
                                            QF\_NIA &         13 &              3 &     16 &     32 \\
                                             UFLIA &          4 &              6 &     15 &     25 \\
                                                UF &          3 &              1 &     18 &     22 \\
                                             QF\_BV &          7 &              3 &      8 &     18 \\
                                               LIA &          4 &              - &     13 &     17 \\
                                     Uncategorized &          5 &              - &     10 &     15 \\
                                             QF\_UF &          1 &              6 &      8 &     15 \\
                                               NIA &          5 &              - &      9 &     14 \\
                                             QF\_FP &          1 &              3 &      9 &     13 \\
                                            QF\_LRA &          4 &              1 &      7 &     12 \\
                                               LRA &          4 &              1 &      7 &     12 \\
                                              Horn &          4 &              2 &      5 &     11 \\
                                             QF\_AX &          2 &              1 &      7 &     10 \\
                                              ALIA &          1 &              - &      8 &      9 \\
                                                BV &          3 &              - &      5 &      8 \\
                                          QF\_UFLIA &          - &              2 &      4 &      6 \\
                                               Set &          1 &              - &      5 &      6 \\
                                           QF\_NIRA &          1 &              4 &      1 &      6 \\
                                             UFIDL &          2 &              - &      2 &      4 \\
                                           AUFNIRA &          - &              1 &      2 &      3 \\
                                             UFLRA &          - &              - &      2 &      2 \\
                                          QF\_ABVFP &          - &              - &      2 &      2 \\
                                          QF\_UFIDL &          - &              - &      1 &      1 \\
                                             UFNIA &          1 &              - &      - &      1 \\
                                            QF\_ABV &          - &              - &      1 &      1 \\
                                                FP &          - &              - &      1 &      1 \\
                                              NIRA &          1 &              - &      - &      1 \\
                \hline
                    \textbf{Total}     & \textbf{157} &    \textbf{83} &    \textbf{316} &    \textbf{556} \\
\bottomrule
\end{tabular}
}

\newcommand{\logicdistribution}{
{\scriptsize
\begin{figure}
\zthreelogic
\hspace{0.8cm}
\cvcfourlogic
    \caption{}
\end{figure}
}
}

\newcommand{\seedtableOne}{
\begin{tabular}{lrrr}
\toprule
\textbf{Logic} &  \textbf{\# non-inc} & \textbf{\# inc} &  \textbf{\# total} \\
\midrule
QF\_SLIA    &            67,584 &           - &  67,584 \\
QF\_FP      &            40,318 &           2 &  40,320 \\
QF\_NIA     &            23,876 &          10 &  23,886 \\
AUFLIRA    &            20,011 &           - &  20,011 \\
QF\_ABVFP   &            18,093 &          69 &  18,162 \\
QF\_BVFP    &            17,231 &         182 &  17,413 \\
QF\_ABV     &            15,084 &        1,272 &  16,356 \\
UFNIA      &            13,509 &           - &  13,509 \\
QF\_NRA     &            11,489 &           - &  11,489 \\
UFLIA      &            10,137 &           - &  10,137 \\
QF\_UF      &             7,457 &         766 &   8,223 \\
QF\_DT      &             8,000 &           - &   8,000 \\
UF         &             7,668 &           - &   7,668 \\
QF\_LIA     &             6,947 &          69 &   7,016 \\
BV         &             5,846 &          18 &   5,864 \\
UFDT       &             4,527 &           - &   4,527 \\
NRA        &             3,813 &           - &   3,813 \\
QF\_UFBV    &             1,234 &        2,330 &   3,564 \\
AUFLIA     &             3,276 &           - &   3,276 \\
FP         &             2,484 &           - &   2,484 \\
LRA        &             2,419 &           5 &   2,424 \\
QF\_S       &             2,319 &           - &   2,319 \\
QF\_IDL     &             2,193 &           - &   2,193 \\
UFLRA      &               15 &        1,870 &   1,885 \\
AUFBVDTLIA &             1,708 &           - &   1,708 \\
QF\_LRA     &             1,648 &          10 &   1,658 \\
AUFNIRA    &             1,480 &         165 &   1,645 \\
QF\_AUFLIA  &             1,303 &          72 &   1,375 \\
\bottomrule
    \vspace{0.03cm}
\end{tabular}
}
\newcommand{\seedtableTwo}{
\begin{tabular}{lrrr}
\toprule
\textbf{Logic} &  \textbf{\# non-inc} & \textbf{\# inc} &  \textbf{\# total} \\
\midrule
QF\_UFLIA   &              583 &         773 &   1,356 \\
QF\_UFLRA   &             1,284 &           - &   1,284 \\
AUFDTLIA   &              728 &           - &    728 \\
LIA        &              607 &           6 &    613 \\
QF\_AX      &              551 &           - &    551 \\
QF\_UFNIA   &              478 &           1 &    479 \\
QF\_UFIDL   &              428 &           - &    428 \\
UFDTLIA    &              327 &           - &    327 \\
QF\_RDL     &              255 &           - &    255 \\
BVFP       &              224 &          10 &    234 \\
QF\_ALIA    &              126 &          44 &    170 \\
QF\_BVFPLRA &              168 &           - &    168 \\
UFBV       &              121 &           - &    121 \\
QF\_ANIA    &               95 &           5 &    100 \\
QF\_AUFBV   &               56 &          31 &     87 \\
UFIDL      &               68 &           - &     68 \\
ALIA       &               42 &          24 &     66 \\
QF\_FPLRA   &               57 &           - &     57 \\
QF\_UFNRA   &               37 &           - &     37 \\
ABVFP      &               30 &           4 &     34 \\
NIA        &               20 &           - &     20 \\
QF\_AUFNIA  &               17 &           - &     17 \\
QF\_LIRA    &                7 &           - &      7 \\
AUFNIA     &                3 &           - &      3 \\
QF\_NIRA    &                3 &           - &      3 \\
UFDTNIA    &                1 &           - &      1 \\
\hline
cvc4regr   &              176 &        1,594 &   1,770 \\
z3test     &              479 &         841 &   1,320 \\
\hline 
\textbf{Total} &   308,640   & 10,173 &  318,813 \\
\bottomrule
\end{tabular}
}

\newcommand{\comparisonConfirmed}{
{
\footnotesize
    \renewcommand{\arraystretch}{1.1}
    \begin{tabular}{lrrrr}
        \toprule
        \multirow{1}{*}[-1em]{\textbf{Approach}} & \multicolumn{2}{c}{\textbf{Bugs in Z3}} & \multicolumn{2}{c}{\textbf{Bugs in CVC4}}\\
        \cmidrule(lr){2-3} \cmidrule(lr){4-5}
        {} & soundness & all & soundness & all\\
        \midrule
        StringFuzz~\cite{blotsky-et-al-cav18} & 0 (0)  & 1 (0) & -  & - \\
        BanditFuzz~\cite{scott-et-al-cav20} & $\geq 1$ (0)  & $\geq 1$ ($0$) & - & - \\
        Bugariu \etal~\cite{bugariu2020automatically} & 3 (1) & 5 (3) & 0 (0) & 0 (0) \\
        YinYang~\cite{semantic-fusion} & 25 (24) & 39 (36) & 5 (5) & 9 (8)  \\
        STORM~\cite{mansur-etal-arxiv2020} & 21 (17) & 27 (21) & 0 (0) & 0 (0) \\
        \hline 
        \textbf{\toolname} & \textbf{\zthreesoundnum} (\zthreesoundnumdefault) & \textbf{\zthreeconfirmed} (\zthreeconfirmeddefault) & \textbf{\cvcfoursoundnum} (\cvcfoursoundnumdefault) & \textbf{\cvcfourconfirmed} (\cvcfourconfirmeddefault)    \\
        \bottomrule
    \end{tabular}
}
} 
\newcommand{\bugcounttable}{
{\footnotesize
\setlength{\tabcolsep}{0.4em}
\renewcommand{\arraystretch}{1.0}
\begin{tabular}{lrr|r}
\toprule
\textbf{Status} &  \textbf{Z3} &  \textbf{CVC4} &  \textbf{Total} \\
\midrule
Reported  &  811 &   281 &    1,092\\
Confirmed &  578 &   241 &     819 \\ 
Fixed     &  521 &   164 &     685\\
Duplicate &   88 &    18 &     106\\
Won't fix &  106 &    16 &     122\\ 
\bottomrule
\end{tabular}
}
}

\newcommand{\bugtypetable}{
{\footnotesize
\setlength{\tabcolsep}{0.4em}
\renewcommand{\arraystretch}{1.15}
\begin{tabular}{lrr|r}
\toprule
\textbf{Type} &  \textbf{Z3} &  \textbf{CVC4} &  \textbf{Total} \\
\midrule
Crash         &  316 &     185 &  501  \\ 
Soundness     &  157 &      27 &   184 \\ 
Invalid model &   83 &      19 &   102 \\ 
Others        &   22 &      10 &   32 \\
\bottomrule
\end{tabular}
}
}

\newcommand{\bugflagtable}{
{\footnotesize
\setlength{\tabcolsep}{0.4em}
\renewcommand{\arraystretch}{1.15}
\begin{tabular}{lrr|r}
\toprule
\textbf{\#Options} &  \textbf{Z3} &  \textbf{CVC4} &  \textbf{Total} \\
\midrule
default &  388 &  101 &  489 \\
1 &  109 &  67 &  176 \\
2 &  45 &  31 &   76 \\
3+ &  36 &  42 &   78 \\
\bottomrule
\end{tabular}
}
}

\newcommand{\logicplot}{

}
\newcommand{\buglogictable}{
{\small
\begin{tabular}{lrr|r}
\toprule
\textbf{Logics} &  \textbf{Z3} &  \textbf{CVC4} &  \textbf{Total} \\
\midrule
LIA     &   3 &     3 &      6 \\
LRA     &   2 &     0 &      2 \\
NRA     &  21 &     3 &     24 \\
QF\_FP   &   1 &     0 &      1 \\
QF\_NIA  &   2 &     2 &      4 \\
QF\_NRA  &   6 &     3 &      9 \\
QF\_S    &  10 &     1 &     11 \\
QF\_SLIA &   5 &     0 &      5 \\
UF      &   2 &     0 &      2 \\
UFLIA   &   6 &     1 &      7 \\
UFLRA   &   1 &     0 &      1 \\
\bottomrule
\end{tabular}
}
}

\newcommand{\buglogicplot}{
\begin{tikzpicture}[scale=1.0]
\begin{axis}[
    ybar,
    legend style={
    at={(0,0)},
    anchor=north east},
    symbolic x coords={4.5.0,4.6.0,4.7.1,4.8.1,4.8.3,4.8.4,4.8.5,trunk},
    yticklabels=\empty,
    xtick={4.5.0,4.6.0,4.7.1,4.8.1,4.8.3,4.8.4,4.8.5,trunk},
    ymin=0,
    ymax=30, 
    height=4.0cm,
    width=10cm,
    xtick pos=bottom,
    bar width=0.20cm,
    x tick label style={xshift=.2em, yshift=.15em},
    every node near coord/.append style={font=\scriptsize},
    nodes near coords,
    nodes near coords align={vertical},
    enlarge x limits=0.1,
    x tick label style={rotate=60,anchor=east},
    ytick style={draw=none}
    ]
\addplot[fill=black!80] coordinates {
    (4.5.0,8) 
    (4.6.0,5) 
    (4.7.1,5) 
    (4.8.1,5) 
    (4.8.3,5)
    (4.8.4,8)
    (4.8.5,10)  
    (trunk,24)
};
\end{axis}
\end{tikzpicture}
}

\newcommand{\mainProcess}{
\begin{algorithm}[H]
    \small
    \DontPrintSemicolon
    \SetKwFunction{randomChoice}{random.choice}
    \SetKwFunction{skeleton}{skeleton}
    \SetKwFunction{mutate}{type\_aware\_mutate}
    \SetKwFunction{val}{validate}
    \SetKwFunction{test}{\toolname}
    \SetKwProg{proc}{Procedure}{:}{}
    \proc{\test{$\setofFormulas, \solverOne, \solverTwo$, $n$}}{
        $\triggers \leftarrow \emptyset$ \;
        \While{true}{ 
            $\varphi \xleftarrow{R} \setofFormulas$ \; 
            \For{$1$ \textbf{to} $n$}{
                $\varphi' \gets$ \mutate($\varphi$) \;
                \If{$\neg$ \val{$\varphi'$, $\solverOne$, $\solverTwo$}}{ 
                    $\triggers \gets \triggers \cup \{ \varphi' \}$ \;
                }
                $\varphi \gets$ $\varphi'$ \;
            }
            \If{Interruption}{
                \textbf{break}
            }
        }
        \Return $\triggers$
    }
\end{algorithm}
}

\newcommand{\mutateFunction}{
\begin{algorithm}[H]
    \small
    \DontPrintSemicolon
    \SetKwFunction{mutate}{type\_aware\_mutate}
    \SetKwFunction{val}{validate}
    \SetKwFunction{subtypes}{subtypes}
    \SetKwProg{func}{Function}{:}{}
    \func{\mutate{$\varphi$}}{
        $f_1 \xleftarrow{R} F(\varphi)$ \;
        $f_2 \xleftarrow{R} \subtypes(f_1)$ \;
        \Return  $\varphi[f_2/f_1]$\; 
    }
\end{algorithm}
}

\newcommand{\validateFunction}{
\begin{algorithm}[H]
\small
\DontPrintSemicolon
\SetKwFunction{mutate}{mutate}
\SetKwFunction{val}{validate}
\SetKwFunction{randomChoice}{random.choice}
\SetKwProg{func}{Function}{:}{}
\func{\val{$ \varphi', \solverOne, \solverTwo$}}{
        \If{$\solverOne(\varphi') = \mathit{error} \ \vee \ $ \\
            \mbox{}\phantom{\textbf{if} \itshape}$\solverTwo(\varphi') = \mathit{error} \ \vee \ $ \\
            \mbox{}\phantom{\textbf{if} \itshape}$\solverOne(\varphi') \neq \solverTwo(\varphi')$}{
            \Return $\mathit{false}$ \;
        }
        \Return $\mathit{true}$ \; 
}
\end{algorithm}
}

\newcommand{\typetable}{
\begin{figure}[t]
    \renewcommand{\arraystretch}{1.2}
    \centering
    \footnotesize
    \begin{tabular}{lll}   
        \toprule    
        \textbf{Function types} & &\textbf{Function Symbols} \\    
        \midrule          
        $ \Gamma, A <: \top \vdash A \times \cdots \times A \to \textit{Bool}$ && \texttt{=, distinct} \\        
        $ \Gamma \vdash \textit{Quantifier} \times \textit{Bool} \to \textit{Bool}$ && \texttt{exists, forall} \\  
        $ \Gamma \vdash \textit{Bool} \times \cdots \times \textit{Bool} \to \textit{Bool}$ && \texttt{and, or, =>} \\        
        $ \Gamma, Int <: Real \vdash \textit{Real} \times \cdots \times \textit{Real} \to \textit{Bool}$  && \texttt{<=, >=, <, >} \\       
        $ \Gamma, Int <: Real \vdash \textit{Real} \times \cdots \times \textit{Real} \to \textit{Real}$ &&  \texttt{+, -, *, /} \\        
        $ \Gamma \vdash \textit{Int} \times \cdots \times \textit{Int} \to \textit{Int}$ &&  \texttt{div} \\         
        $ \Gamma \vdash \textit{Int} \times \textit{Int} \to \textit{Int}$ &&  \texttt{mod} \\               
        \bottomrule     
    \end{tabular}
\label{typetable}
\caption{SMT function symbols categorized by their type. 
    \label{fig:op-type-table}} 
\end{figure}
}

\newcommand{\filesChangesZ}{
{\scriptsize
\setlength{\tabcolsep}{0.3em}
\renewcommand{\arraystretch}{1.15}
\begin{tabular}{lrrr}
\toprule
\textbf{Filename} &  \textbf{\#LoC changes} \\
\midrule
smt/theory\_seq.cpp  &  1082 \\
ast/rewriter/seq\_rewriter.cpp &  837\\
smt/theory\_arith\_nl.h     &  637 \\
smt/theory\_lra.cpp &   434 \\
tactic/ufbv/ufbv\_rewriter\.cpp &   375 \\
math/lp/emonics.cpp &  333 \\
smt/smt\_context.cpp &   265 \\
smt/theory\_recfun.cpp &   247 \\
tactic/core/dom\_simplify\_tactic.cpp &   229 \\
tactic/arith/purify\_arith\_tactic.cpp &   224 \\
\bottomrule
\end{tabular}
}
}

\newcommand{\filesChangesCVC}{
{\scriptsize
\setlength{\tabcolsep}{0.3em}
\renewcommand{\arraystretch}{1.15}
\begin{tabular}{lrrr}
\toprule
\textbf{Filename} &  \textbf{\#LoC changes} \\
\midrule
theory/quantifiers/inst\_propagator.cpp  &  864 \\
theory/quantifiers/quantifiers\_rewriter.cpp & 611 \\
theory/arith/nonlinear\_extension.cpp   &  519\\
theory/strings/regexp\_operation.cpp  &  292\\
theory/quantifiers/local\_theory\_ext.cpp &   270 \\
theory/strings/theory\_strings.cpp &   250 \\
theory/arith/nonlinear\_extension.h &  212 \\
preprocessing/passes/int\_to\_bv.cpp &   201 \\
theory/quantifiers/inst\_propagator.h &  194 \\
smt/smt\_engine.cpp &  130 \\
\bottomrule
\end{tabular}
}
}

\newcommand{\filesCommitsZ}{
{\scriptsize
\setlength{\tabcolsep}{0.3em}
\renewcommand{\arraystretch}{1.15}
\begin{tabular}{lrrr}
\toprule
\textbf{File} &  \textbf{\#Commits} \\
\midrule
smt/theory\_seq.cpp  &  33 \\
smt/smt\_context.cpp &  30  \\
smt/theory\_lra.cpp  &  25 \\
qe/qsat.cpp & 16\\
ast/ast.cpp & 16 \\
smt/theory\_arith\_nl.h &   15 \\
ast/rewriter/seq\_rewriter.cpp & 14\\
ast/rewriter/rewriter\_def.h &  12 \\
tactic/arith/purify\_arith\_tactic.cpp & 11\\
smt/theory\_seq.h &   11\\
\bottomrule
\end{tabular}
}
}

\newcommand{\filesCommitsCVC}{
{\scriptsize
\setlength{\tabcolsep}{0.3em}
\renewcommand{\arraystretch}{1.15}
\begin{tabular}{lrrr}
\toprule
\textbf{File} &  \textbf{\#Commits} \\
\midrule
theory/arith/nonlinear\_extension.cpp  &  7 \\
theory/strings/theory\_strings.cpp & 6\\
preprocessing/passes/unconstrained\_simplifier.cpp & 5\\
theory/arith/nl\_model.cpp &  5 \\
smt/smt\_engine.cpp     &  5 \\
theory/quantifiers/extended\_rewriter.cpp &   4 \\
theory/quantifiers/quantifiers\_rewriter.cpp &   4 \\
theory/quantifiers\_engine.cpp &   3 \\
theory/quantifiers/instantiate.cpp &  3 \\
theory/arith/nonlinear\_extension.h &   3 \\
\bottomrule
\end{tabular}
}
}

\newcommand{\zthreecoverage}{
\pgfplotsset{
tick label style={font=\scriptsize},
}
\begin{tikzpicture}[scale=1.0]
\begin{axis}[
    ybar,
    height=4.0cm,
    width=5.2cm,
    enlarge y limits=upper,
    enlarge x limits={abs=0.60cm},
    tickwidth=0.1cm,
    legend columns=-1,
    legend entries={Benchmark, \toolname},
    legend style={draw=none, font=\scriptsize},
    legend to name=named,
    title={\scriptsize Z3},
    ymin=0,
    ymax=40.0,
    xtick pos=left,
    bar width=0.20cm,
    xtick distance=1,
    symbolic x coords={lines,functions,branches},
    every node near coord/.append style={font=\tiny},
    xtick=data,
    ytick={0,10,20,30,40},
    yticklabels={0\%,10\%,20\%,30\%,40\%},
    ]
\addplot[fill=white] coordinates {(lines,33.2) (functions,36.2) (branches,13.7)};
\addplot[fill=black!80] coordinates {(lines,33.5) (functions,36.4) (branches,13.8)};
\end{axis}
\end{tikzpicture}
}

\newcommand{\cvcfourcoverage}{
\pgfplotsset{
tick label style={font=\scriptsize},
}
\begin{tikzpicture}[scale=1.0]
\begin{axis}[
    ybar,
    height=4.0cm,
    width=5.2cm,
    enlarge y limits=upper,
    enlarge x limits={abs=0.60cm},
    tickwidth=0.1cm,
    legend columns=-1,
    legend entries={Benchmark, \toolname},
    legend style={draw=none, font=\scriptsize},
    legend to name=named,
    title={\scriptsize CVC4},
    ymin=0,
    ymax=50.0,
    xtick pos=left,
    bar width=0.20cm,
    xtick distance=1,
    symbolic x coords={lines,functions,branches},
    every node near coord/.append style={font=\tiny},
    xtick=data,
    ytick={0,10,20,30,40,50},
    yticklabels={0\%,10\%,20\%,30\%,40\%,50\%},
    ]
\addplot[fill=white] coordinates {(lines,28.5) (functions,47.1) (branches,14.3)};
\addplot[fill=black!80] coordinates {(lines,28.8) (functions,47.4) (branches,14.4)};
\end{axis}
\end{tikzpicture}
}

\newcommand{\zthreetrace}{
\pgfplotsset{
tick label style={font=\tiny},
}
\begin{tikzpicture}[scale=1.0]
\begin{axis}[
    title={\scriptsize Z3},
    ymin=0.2,
    height=4.5cm,
    width=13cm,
    xtick pos=left,
    ytick pos=left,
    enlarge x limits={abs=0.3cm},
    symbolic x coords={1, 2, 3, 4, 5, 6, 7, 8, 9, 10, 11, 12, 13, 14, 15, 16, 17, 18, 19, 20, 21, 22, 23, 24, 25, 26, 27, 28, 29, 30, 31, 32, 33, 34, 35, 36, 37, 38, 39, 40},
    every node near coord/.append style={font=\tiny},
    xtick={1,5,10,15,20,25,30,35,40},
    ytick={0.20,0.21,0.22,0.23,0.24,0.25,0.26,0.27,0.28,0.29,0.30,0.31,0.32,0.33},
    yticklabels={0.20,0.21,0.22,0.23,0.24,0.25,0.26,0.27,0.28,0.29,0.30,0.31,0.32,0.33},
    ]
\addplot[black!80, mark=x] coordinates {
    (1,0.325272004)
    (2,0.291573586)
    (3,0.282566156)
    (4,0.271245927)
    (5,0.262258725)
    (6,0.261269051)
    (7,0.256455079)
    (8,0.257741102)
    (9,0.255891848)
    (10,0.253493734)
    (11,0.250277328)
    (12,0.250194992)
    (13,0.248247478)
    (14,0.245925348)
    (15,0.243820131)
    (16,0.243722177)
    (17,0.244364596)
    (18,0.241686784)
    (19,0.242976892)
    (20,0.239335752)
    (21,0.240918673)
    (22,0.240913426)
    (23,0.24042387)
    (24,0.239802827)
    (25,0.240337392)
    (26,0.238957599)
    (27,0.238165576)
    (28,0.237972416)
    (29,0.238588981)
    (30,0.238776983)
    (31,0.239385663)
    (32,0.237639806)
    (33,0.237451728)
    (34,0.236910402)
    (35,0.238396427)
    (36,0.238074437)
    (37,0.237271563)
    (38,0.236727566)
    (39,0.23754841)
    (40,0.23705653)
};
\end{axis}
\end{tikzpicture}
}

\newcommand{\cvcfourtrace}{
\pgfplotsset{
tick label style={font=\tiny},
}
\begin{tikzpicture}[scale=1.0]
\begin{axis}[
    title={\scriptsize CVC4},
    ymin=0.43,
    height=4.5cm,
    width=13cm,
    xtick pos=left,
    ytick pos=left,
    enlarge x limits={abs=0.3cm},
    symbolic x coords={1, 2, 3, 4, 5, 6, 7, 8, 9, 10, 11, 12, 13, 14, 15, 16, 17, 18, 19, 20, 21, 22, 23, 24, 25, 26, 27, 28, 29, 30, 31, 32, 33, 34, 35, 36, 37, 38, 39, 40},
    every node near coord/.append style={font=\tiny},
    xtick={1,5,10,15,20,25,30,35,40},
    ytick={0.40,0.42,0.44,0.46,0.48,0.50,0.52,0.54,0.56,0.58,0.60,0.62,0.64},
    yticklabels={0.40,0.42,0.44,0.46,0.48,0.50,0.52,0.54,0.56,0.58,0.60,0.62,0.64},
    ]
\addplot[black!80, mark=x] coordinates {
(1,0.628238745)
(2,0.568436029)
(3,0.545787185)
(4,0.533950687)
(5,0.518138115)
(6,0.506677601)
(7,0.49850781)
(8,0.494197684)
(9,0.482598155)
(10,0.480448713)
(11,0.474269849)
(12,0.47042468)
(13,0.466873053)
(14,0.47053023)
(15,0.465533585)
(16,0.463928929)
(17,0.462599027)
(18,0.462313188)
(19,0.461171922)
(20,0.462343466)
(21,0.459809286)
(22,0.457312859)
(23,0.456080735)
(24,0.455002176)
(25,0.452045366)
(26,0.451411947)
(27,0.450194363)
(28,0.450372331)
(29,0.448040012)
(30,0.447894633)
(31,0.445399303)
(32,0.443597097)
(33,0.444684662)
(34,0.446493317)
(35,0.445382969)
(36,0.444070156)
(37,0.444578588)
(38,0.444378905)
(39,0.444893659)
(40,0.443383219)
};
\end{axis}
\end{tikzpicture}
}

\newcommand{\commitsFilesZ}{
    \pgfplotstableread[row sep=\\,col sep=&]{
        interval & count \\
        1&185\\
        2&81\\
        3&46\\
        4&19\\
        5&22\\
        6&7\\
        7&4\\
        8&5\\
        9&1\\
        11&3\\
        13&1\\
        17&1\\
        55&1\\
        65&1\\
        }\mydata    
\begin{tikzpicture}
    \tikzset{every node}=[font=\tiny\sffamily]
    \begin{axis}[
        ybar,
        bar width=.20cm,
        width=.82\textwidth,
        height=4.0cm,
        legend style={at={(0.5,1)},
        anchor=north,legend columns=-1},
        symbolic x coords={1,2,3,4,5,6,7,8,9,11,13,17,55,65},
        xtick pos=left,
        ytick pos=left,
        xtick=data,
        enlarge x limits={abs=0.4cm},
        nodes near coords,
        nodes near coords align={vertical},
        ymin=0,ymax=250,
        ylabel={\#Commits},
        xlabel={\#File changes},
        label style={font=\footnotesize},
        ]
        \addplot[fill=black!80] table[x=interval,y=count]{\mydata};
    \end{axis}
\end{tikzpicture}
}

\newcommand{\commitsFilesCVC}{
    \pgfplotstableread[row sep=\\,col sep=&]{
        interval & count \\
        1&59\\
        2&24\\
        3&6\\
        4&6\\
        5&3\\
        6&1\\
        13&1\\
        18&1\\
        }\mydata
    \begin{tikzpicture}
        \tikzset{every node}=[font=\tiny\sffamily]
        \begin{axis}[
            ybar,
            bar width=.20cm,
            width=0.83\textwidth,
            height=4.0cm,
            legend style={at={(0.5,1)},
            anchor=north,legend columns=-1},
            symbolic x coords={1,2,3,4,5,6,13,18},
            xtick pos=left,
            ytick pos=left,
            xtick=data,
            enlarge x limits={abs=0.4cm},
            nodes near coords,
            nodes near coords align={vertical},
            ymin=0,ymax=80,
            ylabel={\#Commits},
            xlabel={\#File changes},
            label style={font=\footnotesize},
            ]
            \addplot[fill=white] table[x=interval,y=count]{\mydata};
        \end{axis}
    \end{tikzpicture}
}

\newcommand{\commitsLoCZ}{
    \pgfplotstableread[row sep=\\,col sep=&]{
        interval & count \\
        0-10&163\\
        10-20&63\\
        100-200&18\\
        20-30&44\\
        200-300&7\\
        30-40&21\\
        300-400&2\\
        40-50&16\\
        400-500&1\\
        50-60&9\\
        500+&1\\
        60-70&13\\
        70-80&10\\
        80-90&7\\
        90-100&2\\
        }\mydata
    \begin{tikzpicture}
        \tikzset{every node}=[font=\tiny\sffamily]
        \begin{axis}[
            ybar,
            bar width=.20cm,
            width=1\textwidth,
            height=4.2cm,
            xticklabel style={rotate=60},
            legend style={at={(0.5,1)},
                anchor=north,legend columns=-1},
            symbolic x coords={0-10,10-20,20-30,30-40,40-50,50-60,60-70,70-80,80-90,90-100,100-200,200-300,300-400,400-500,500+},
            xtick pos=left,
            ytick pos=left,
            xtick=data,
            enlarge x limits={abs=0.4cm},
            nodes near coords,
            nodes near coords align={vertical},
            ymin=0,ymax=200,
            ylabel={\#Commits},
            xlabel={\#LOC changes},
            label style={font=\footnotesize},
            ]
            \addplot[fill=black!80] table[x=interval,y=count]{\mydata};
        \end{axis}
    \end{tikzpicture}
}

\newcommand{\commitsLoCCVC}{
    \pgfplotstableread[row sep=\\,col sep=&]{
    interval & count \\
    0-10&34\\
    10-20&25\\
    100-200&2\\
    20-30&9\\
    200-300&4\\
    30-40&5\\
    300-400&3\\
    40-50&4\\
    50-60&3\\
    500+&2\\
    60-70&2\\
    70-80&3\\
    80-90&3\\
    90-100&2\\
    }\mydata
    \begin{tikzpicture}
        \tikzset{every node}=[font=\tiny\sffamily]
        \begin{axis}[
            ybar,
            bar width=.20cm,
            width=1\textwidth,
            height=4.2cm,
            legend style={at={(0.5,1)},
                anchor=north,legend columns=-1},
            symbolic x coords={0-10,10-20,20-30,30-40,40-50,50-60,60-70,70-80,80-90,90-100,100-200,200-300,300-400,500+},
            xtick pos=left,
            ytick pos=left,
            xtick=data,
            xticklabel style={rotate=60},
            enlarge x limits={abs=0.4cm},
            nodes near coords,
            nodes near coords align={vertical},
            ymin=0,ymax=42,
            ylabel={\#Commits},
            xlabel={\#LOC changes},
            label style={font=\footnotesize},
            ]
            \addplot[fill=white] table[x=interval,y=count]{\mydata};
        \end{axis}
    \end{tikzpicture}
}

\newcommand{\formulasize}{
    \pgfplotstableread[row sep=\\,col sep=&]{
    interval & count \\
    0-100&154\\
    100-200&270\\
    1K-2K&24\\
    200-300&122\\
    2K-3K&11\\
    300-400&59\\
    3K-4K&8\\
    400-500&35\\
    4K-5K&1\\
    500-600&20\\
    5K+&5\\
    600-700&11\\
    700-800&14\\
    800-900&8\\
    900-1K&5\\
    }\mydata

    \begin{tikzpicture}
        \tikzset{every node}=[font=\tiny\sffamily]
        \begin{axis}[
            ybar,
            bar width=.20cm,
            width=.75\textwidth,
            height=4.0cm,
            legend style={at={(0.5,1)},
                anchor=north,legend columns=-1},
            symbolic x coords={0-100,100-200,200-300,300-400,400-500,500-600,600-700,700-800,800-900,900-1K,1K-2K,2K-3K,3K-4K,4K-5K,5K+},
            xtick=data,
            xtick pos=left,
            ytick pos=left,
            xticklabel style={rotate=60},
            enlarge x limits={abs=0.6cm},
            nodes near coords,
            nodes near coords align={vertical},
            ymin=0,ymax=320,
            ylabel={\#Formulas},
            xlabel={Formula sizes (bytes)},
            label style={font=\footnotesize},
            ]
            \addplot[fill=black!80] table[x=interval,y=count]{\mydata};
        \end{axis}
    \end{tikzpicture}
}

\newcommand{\logicsdistribution}{
\pgfplotsset{
tick label style={font=\scriptsize},
}

\begin{tikzpicture}[scale=1.0]
\begin{axis}[
    ybar,
    legend style={
    at={(0,0)},
    anchor=north east},
    symbolic x coords={NRA, UFLIA, UF, LIA, Horn, LRA, Array, NIA, BV, Set, UFLRA, NIRA, Set logic, UFNIA, Other},
    xtick={NRA, UFLIA, UF, LIA, Horn, LRA, Array, NIA, BV, Set, UFLRA, NIRA, Set logic, UFNIA,Other},
    yticklabels=\empty,
    ymin=0,
    ymax=65,
    height=4.0cm,
    width=15.2cm,
    xtick pos=left,
    bar width=0.24cm,
    x tick label style={xshift=.2em, yshift=.15em},
    every node near coord/.append style={font=\tiny},
    nodes near coords,
    nodes near coords align={vertical},
    enlarge x limits=0.1,
    x tick label style={rotate=45,anchor=east},
    ytick style={draw=none}]
\addplot[fill=bardark] coordinates {
        (NRA,52)
        (UFLIA,37)
        (UF,15)
        (LIA,14)
        (Horn,8)
        (LRA,16)
        (Array,11)
        (NIA,7)
        (BV,12)
        (Set,7)
        (UFLRA,2)
        (NIRA,1)
        (Set logic,1)
        (UFNIA,1)
};
\addplot[fill=barblack] coordinates {
        (NRA,41)
        (UFLIA,27)
        (UF,12)
        (LIA,9)
        (Horn,8)
        (LRA,8)
        (Array,7)
        (NIA,7)
        (BV,6)
        (Set,3)
        (UFLRA,2)
        (NIRA,1)
        (Set logic,1)
        (UFNIA,1)
};

\end{axis}
\end{tikzpicture}
}

\section{Introduction}
Satisfiability Modulo Theory (SMT) solvers are important tools for many programming 
languages advances and applications, \eg, symbolic execution~\cite{godefroid-et-al-pldi-2005, cadar-et-al2008}, 
program synthesis~\cite{solar2008program}, solver-aided programming~\cite{torlak2014lightweight}, 
and program verification~\cite{detlefs-et-al2005, deline-et-al2005}. Incorrect results 
from SMT solvers can invalidate the results of these tools, which can be disastrous 
in safety-critical domains.  Hence, the SMT community has undertaken great efforts to make SMT solvers 
reliable. Examples include the standardized input/output file formats for SMT solvers, 
semi-formal logic/theory specifications, extensive benchmark repositories, and 
yearly-held SMT solver competitions. To date, there are several mature SMT solvers, among which 
Z3~\citep{moura-bjoerner-tacas08} (\zthreestars stars on GitHub) and CVC4~\citep{barrett-cav11} 
(\cvcfourstars stars on GitHub) are the most prominent ones. Both Z3 and CVC4 are very stable and reliable. 
In Z3, there have been fewer than $150$ reported soundness bugs in more than three years, while fewer than $50$ 
in CVC4 in more than $8$ years.\footnote{Data recorded prior to any SMT fuzzing campaigns:
July 2010 to October 2019 for CVC4; April 2015 to October 2019 for Z3.} 
Despite this, SMT solvers are complex pieces of software and inevitably still have latent bugs.
Various automated testing approaches~\citep{brummayer2009fuzzing,blotsky-et-al-cav18,bugariu2018automatically} have 
been devised for finding bugs in SMT solvers.   
However, nearly all SMT solver soundness bugs have still been exposed directly by their 
client applications, not by these techniques. This has only begun to change with the recently proposed
Semantic Fusion~\citep{semantic-fusion} and STORM~\cite{mansur-etal-arxiv2020}. 
Both exposed a number of soundness bugs in Z3, while Semantic Fusion additionally exposed some soundness bugs in CVC4.
Yet, it is unclear whether SMT solvers have reached a strong level of maturity 
and how many latent bugs remain in them.
{\parindent0pt \paragraph{\textbf{Type-aware operator mutation}} To answer this question, we introduce \approachname, a simple, yet
unusually effective approach for stress-testing SMT solvers. Its key idea  
is to mutate functions within SMT formulas with functions of conforming types. Fig.~\ref{fig:illustration}
illustrates type-aware operator mutation on an example formula.    
We replace the "\texttt{\small distinct}" in $\varphi$ by an operator of conforming type, 
\eg, the equals operator "\texttt{=}" to obtain formula $\varphi_{\text{test}}$.  
We then differentially test SMT solvers with $\varphi_{\text{test}}$ as input and 
observe their results. If the results differ, \eg, one SMT solver returns \texttt{\small sat}   
while the other returns \texttt{\small unsat}, we have found a soundness bug in either   
of the tested solvers. 
Formula $\varphi_{\text{test}}$ is clearly unsatisfiable, as $b$ cannot exist whenever $a$ is odd.
In fact, while CVC4 correctly returns \texttt{\small unsat} on $\varphi_{\text{test}}$, Z3 incorrectly 
reports \texttt{\small sat} on $\varphi_{\text{test}}$. Thus, $\varphi_{\text{test}}$ 
has triggered a soundness bug in Z3 which was promptly fixed by Z3's main developer.
}
%
\begin{figure}[t]
\begin{subfigure}{.40\textwidth}
\centering
\begin{lstlisting}[basicstyle=\scriptsize\ttfamily,numbers=none]
; \phi
(assert (forall ((a Int)) 
        (exists ((b Int)) 
        (<@\colorbox{gray!40}{distinct}@> (* 2 b) a))))
(check-sat)
\end{lstlisting}
\end{subfigure}
\hspace{0.1cm}
\begin{subfigure}{.40\textwidth}
\centering
\begin{lstlisting}[basicstyle=\scriptsize\ttfamily,numbers=none]
; \phi_{test}
(assert (forall ((a Int)) 
        (exists ((b Int)) 
        (<@\colorbox{gray!40}{=}@> (* 2 b) a))))
(check-sat)
\end{lstlisting}
\end{subfigure}

\caption{Type-aware operator mutation illustrated. We mutate the distinct operator in $\varphi$ 
        to the equals operator (see $\varphi_{\text{test}}$). Formula $\varphi_{\text{test}}$ triggers 
        a soundness bug in Z3 which reports \texttt{\small sat} on this unsatisfiable formula.
        \\ \footnotesize \url{https://github.com/Z3Prover/z3/issues/3973}
    \label{fig:illustration}
        }
\end{figure}
{\parindent0pt \paragraph{\textbf{Bug hunting with \toolname}}
We have engineered \toolname, a practical realization of \approachname. 
\toolname is unusually effective. During our bug hunting campaign from September 2019  
to September 2020, we found and reported \reported bugs in Z3 and CVC4 issue trackers, among which \confirmed   
were confirmed by the developers and \fixed were already fixed. We have found bugs across    
various logics such as (non-)linear integer and real arithmetic, uninterpreted functions, 
bit-vectors, strings, sets, sequences, array, floating-point, and combinations of these       
logics. Among these, most of the bugs (\withoutopts) were found in the default modes of the solvers,      
\ie, without additionally supplied options. This underpins the importance of our findings.
We have found many high-quality soundness bugs in Z3 and notably also in CVC4,     
which has been proven to be a very robust SMT solver by previous work. 
The root causes of the bugs that we found are often complex         
and, sometimes require the developers to perform major code changes to fix the underlying issues.  
The developers of Z3 and CVC4 greatly appreciated our bug finding effort with comments like 
"Great find!", "Thanks a lot for the bug report!" or labelling our bug reports 
as "major".     
}

{\parindent0pt \paragraph{\textbf{Comparison of \toolname with recent SMT fuzzers}}
In this paper, we use the term bug trigger to refer to a formula that triggers a bug 
in an SMT solver and the term bug to refer to a single unique bug in an SMT solver.  
Note, multiple bug triggers can be caused by the same underlying bug. All bug counts 
mentioned in this paper refer to unique bugs.  
Fig.~\ref{fig:comparison} shows a comparison of \toolname with recent tools on 
SMT solver testing from the last two years in terms of bug counts. 
We have not considered older approaches       
and defer to the related work section (Section~\ref{sec:related-work}) for a detailed discussion and literature review. 
Recent approaches can be roughly separated into two categories: generators (StringFuzz~\cite{blotsky-et-al-cav18}, Bugariu and M\"uller's approach~\cite{bugariu2020automatically},
BanditFuzz~\cite{scott-et-al-cav20}) and mutators (StringFuzz, YinYang~\cite{semantic-fusion}, STORM~\cite{mansur-etal-arxiv2020}).  StringFuzz 
is a string formula generator that also comes with a mutator. It mainly targets performance 
issues in z3str3~\cite{berzish-et-al-2017}, an alternative string solver in Z3. 
StringFuzz can find correctness bugs as a by-product; the paper mentioned one. Bugariu and M\"uller's 
approach is a formula synthesizer for string logics generating formulas that are by 
construction (un)satisfiable. They found $5$ bugs in Z3 in total with $3$ soundness  
bugs, but none in CVC4. Recently, BanditFuzz, a reinforcement learning-based fuzzer  
has been proposed. Similar to StringFuzz, BanditFuzz's main focus is on performance     
issues in SMT solvers and less on correctness bugs. The authors have 
identified inconsistent results for 1600 syntactically different bug triggers on 
the four SMT solvers Z3, CVC4, MathSAT~\cite{cimatti-et-al2013}, Colibri, and 
100 bug triggers in z3str3. However, the number of unique bugs in 
Z3 remains unclear as the authors did not reduce and report the bug triggers to filter out the duplicates.
Among the mutation-based fuzzers, YinYang is an approach to stress-test SMT solvers 
by fabricating fused formula pairs that are by construction either (un)satisfiable. 
YinYang found $39$ bugs in Z3 and $9$ in CVC4. Another 
recent approach is STORM which is based on a three-phase process of seed 
fragmentation, formula generation and instance generation. STORM has found $27$ 
bugs in Z3 with $21$ being soundness bugs, but none in CVC4.  
}
\begin{figure}[t]
    \comparisonConfirmed
    \caption{Comparison of confirmed bugs found by \toolname against the bug findings of recent 
             SMT solver testing approaches on the trunks of Z3 and CVC4. 
             In parentheses: confirmed bugs in the default modes of the solvers.  
             Performance issues are excluded from this comparison.
             \label{fig:comparison}
    }
    \vspace{-0.5cm}
\end{figure}

As Fig.~\ref{fig:comparison} illustrates, our realization \toolname of type-aware operator mutation 
compares favorably against all existing approaches by a significant margin --- \toolname found 
substantially more bugs in both Z3 and CVC4 in terms of all bugs, the soundness bugs in Z3 and CVC4, and bugs 
for the default modes of the solvers. Existing approaches also extensively tested Z3 and CVC4, and have
missed the bugs found by \toolname. In summary, we make the following contributions in this paper: 
%
\begin{itemize}
    \item We introduce \approachname, a simple, but unusually effective approach for stress-testing SMT solvers;
    \item We have realized \approachname \ within our tool \toolname \ in no more than $\numlinespythoncode$ 
    lines of code. \toolname \ helps SMT solver developers and practitioners 
    to stress-test SMT solver decision procedures regardless of the used logic 
    and solver;
    \item Between September 2019 and September 2020, we stress-tested 
    Z3 and CVC4 using \toolname, and reported \reported unique bugs   
    on the respective GitHub issue trackers of Z3 and CVC4. Out of these,   
    \confirmed bugs were confirmed, and \fixed bugs were fixed. 
    Most confirmed bugs were triggered in the default modes (\withoutopts) of the solvers, 
    and many were soundness bugs (\soundnum)\footnote{All links to the bug reports 
    are provided under the URL \url{testsmt.github.io/opfuzz_bugs.html}}; 
    \item We have conducted an in-depth analysis of the bugs to understand in which parts of 
    the SMT solvers these bugs occur. Furthermore, we examined the effort  
    necessary for the developers to fix \toolname's bugs. Our results show      
    that many bugs occur in the core parts of the SMT solvers, and some      
    require the developers to perform major code changes. 
\end{itemize}
{\parindent0pt \paragraph{\textbf{Organization of the paper}}
Section~\ref{sec:illustrative-examples} illustrates the idea behind \approachname. 
Section~\ref{sec:approach} presents \approachname \ formally,
and shows how we apply it to SMT solver testing through 
our realization \toolname. We then present our empirical evaluation (Section~\ref{sec:empirical-evaluation})
which includes detailed statistics about our bug findings. Section~\ref{sec:bug-study}       
introduces our quantitative analysis of the detected bugs and in-depth investigation 
on a set of sampled bugs to provide further insight. Finally, we survey related work (Section~\ref{sec:related-work}) 
and conclude (Section~\ref{sec:conclusion}). 
}

\begin{figure}[t]
\begin{subfigure}{.45\linewidth}
\centering
\begin{lstlisting}[basicstyle=\scriptsize\ttfamily]
(set-logic NRA)
(declare-fun a () Real)
(declare-fun b () Real)
(assert (<@\colorbox{gray!40}{=}@> (* a b) 1))
(check-sat) (get-model)   
\end{lstlisting}
\caption{Invalid model bug in CVC4.
\\ \scriptsize \url{https://github.com/CVC4/CVC4/issues/3407}\ \ 
\label{ex:c}}
\end{subfigure}
\hspace{0.5cm}
\begin{subfigure}{.45\linewidth}
\centering
\begin{lstlisting}[basicstyle=\scriptsize\ttfamily]
(set-logic NRA)
(declare-fun a () Real)
(declare-fun b () Real)
(assert (<@\colorbox{gray!40}{>}@> (* a b) 1))
(check-sat) (get-model)   
\end{lstlisting}
\caption{Mutating the equals operator in Fig.~\ref{ex:c} to a greater operator makes the bug disappear.
 \\ \scriptsize \phantom{\url{https://github.com/Z3Prover/z3/issues/xxxx}}
\label{ex:d}
}
\end{subfigure}

\vspace{0.2cm}

\begin{subfigure}{.45\textwidth}
\centering
\begin{lstlisting}[basicstyle=\scriptsize\ttfamily]
(declare-fun f (Int) Bool)
(declare-fun g (Int) Bool)
(assert (<@\colorbox{gray!40}{distinct}@> f g))
(check-sat)
(get-model)
\end{lstlisting}
\caption{ 
Invalid model bug in CVC4.\\
\scriptsize \url{https://github.com/CVC4/CVC4/issues/3527} \ \ \label{ex:g}}
\end{subfigure}
\hspace{0.5cm}
\begin{subfigure}{.45\textwidth}
\centering
\begin{lstlisting}[basicstyle=\scriptsize\ttfamily]
(declare-fun f (Int) Bool)
(declare-fun g (Int) Bool)
(assert (<@\colorbox{gray!40}{=}@> f g))
(check-sat)
(get-model)
\end{lstlisting}
\caption{Mutating the equals operator in Fig.~\ref{ex:g} to a distinct operator 
        makes the bug disappear. \\
        }
\label{ex:h}
\end{subfigure}

\vspace{0.2cm}

\begin{subfigure}{.45\textwidth}
\centering
\begin{lstlisting}[basicstyle=\scriptsize\ttfamily]
(declare-fun x () Real)                                                         
(assert (<@\colorbox{gray!40}{distinct}@> x (sin 4.0)))                                                 
(check-sat)                                                                     
\end{lstlisting}
\caption{CVC4 crashes on this formula. 
\\ \scriptsize \url{https://github.com/CVC4/CVC4/issues/3614}\ \ \label{ex:e}}
\end{subfigure}
\hspace{0.5cm}
\begin{subfigure}{.45\textwidth}
\centering
\begin{lstlisting}[basicstyle=\scriptsize\ttfamily]
(declare-fun x () Real)                                                         
(assert (<@\colorbox{gray!40}{>=}@> x (sin 4.0)))                                                 
(check-sat)              
\end{lstlisting}
\caption{Mutating the distinct operator in Fig.~\ref{ex:e} to a greater than 
operator makes the bug disappear.}
\end{subfigure}
%

\caption{Left column: bug-triggering formulas in SMT-LIB format. 
    Right column: formulas that were transformed from the corresponding
    bug-triggering formulas by a single operator change. 
    \label{fig:bug-array}}
\end{figure}

\section{Motivating Examples}
\label{sec:illustrative-examples}
This section gives a short introduction on the SMT-LIB~\cite{BarST-SMT-10} language,
the standard for describing SMT formulas, and then motivates our technique 
\emph{type-aware operator mutation}. 

{\parindent0pt\paragraph{\textbf{SMT-LIB language}}
We consider the following subset of statements: 
\texttt{\small declare-fun}, \texttt{\small assert}, \texttt{\small check-sat} 
and \texttt{\small get-model}. Variables are declared as zero-valued functions. 
For example, the declaration \texttt{\small "(declare-fun a () Real)"}
declares a variable of type real. An \texttt{\small assert} statement specifies 
constraints. The predicates within the constraints have different types, \eg,    
the constraint \texttt{\small "(assert~(<= (/ x 4)~ (*~5~x)))"} includes predicates of   
real and boolean types. Multiple constraints can be viewed as the conjunction of the 
constraints in each individual constraint statement. The \texttt{\small (check-sat)} 
statement queries the solver to decide on the satisfiability of a formula.  
If all constraints are satisfied, the formula is satisfiable; otherwise, the formula 
is unsatisfiable. We can obtain a model, \ie, a satisfiable assignment, 
for a satisfiable formula by the \texttt{\small (get-model)} statement.
}
{\parindent0pt \paragraph{\textbf{Type-aware operator mutation}}
We first examine three exemplary bugs that were found by our technique. 
Consider the formula in Fig.~\ref{ex:c} on which CVC4 returns the following model: 
$a = -\frac{3}{2}$ and $b = -\frac{1}{2}$. This model is invalid as $a \cdot b \neq 1$. 
Mutating the equals operator \texttt{\small =} to the greater operator \texttt{\small >} 
hides this bug (see Fig.~\ref{ex:d}). As another example, consider the formula 
in Fig.~\ref{ex:g}. CVC4 gives an invalid model on this formula by setting 
$f = g = \mathit{false}$. Furthermore CVC4 crashes on the formula in Fig.~\ref{ex:e}. 
Again in both cases, the bug disappears with a single operator change 
(see Fig.~\ref{ex:d} and Fig.~\ref{ex:h}).  
All illustrated cases show that operators play an important role in triggering 
SMT solver bugs. This inspired our technique, \approachname, that is to stress-test 
SMT solvers via mutating operators and use the so mutated formulas for stress-testing 
SMT solvers.
}

However, substituting an operator with another 
arbitrary operator may not always yield a syntactically correct formula.
As an example, consider Fig.~\ref{formula:original} that presents a syntactically correct seed formula. 
By substituting the first operator greater than operator \texttt{\small>} to \texttt{\small *}, 
the formula becomes syntactically incorrect (see Fig.~\ref{formula:wrong}). 
This is because the \texttt{\small assert} 
statement expects a boolean expression, while \texttt{\small *} returns a real. 
\begin{figure}[t]
\begin{subfigure}{0.45\textwidth}
\begin{lstlisting}[basicstyle=\scriptsize\ttfamily]
(declare-fun a () Real)
(assert (<@\colorbox{gray!40}{>}@> (/ (* 2 a) a) (* a a) 1))
(check-sat)
\end{lstlisting}
\vspace{-5pt}
\caption{Original formula}
\vspace{5pt}
\label{formula:original}
\end{subfigure}
\begin{subfigure}{0.45\textwidth}
\begin{lstlisting}[basicstyle=\scriptsize\ttfamily]
(declare-fun a () Real)
(assert (<@\colorbox{gray!40}{*}@> (/ (* 2 a) a) (* a a) 1))
(check-sat)
\end{lstlisting}
\vspace{-5pt}
\caption{Syntactically incorrect mutant.}
\vspace{5pt}
\label{formula:wrong}
\end{subfigure}
\begin{subfigure}{0.45\textwidth}   
\begin{lstlisting}[basicstyle=\scriptsize\ttfamily]
(declare-fun a () Real)
(assert (<@\colorbox{gray!40}{=}@> (/ (* 2 a) a) (* a a) 1))
(check-sat)
\end{lstlisting}
\vspace{-5pt}
\caption{Syntactically correct mutant.}
\label{formula:one}
\end{subfigure}
\begin{subfigure}{0.45\textwidth}   
\begin{lstlisting}[basicstyle=\scriptsize\ttfamily]
(declare-fun a () Real)
(assert (= (/ (* 2 a) a) (<@\colorbox{gray!40}{/}@> a a) 1)) 
(check-sat)
\end{lstlisting}
\vspace{-5pt}
\caption{Bug triggering mutant.}
\label{formula:two}
\end{subfigure}
\caption{Motivating examples for \approachname.}
\label{fig:motivating}
\vspace{-0.3cm}
\end{figure}
The formula of Fig.~\ref{formula:wrong} is of little value to testing an SMT solver's decision procedures  
since the solvers would reject such formulas already at a preprocessing stage.      
Hence, we have to consider the operator types for the substitutions, \ie,
avoid substituting an operator returning a boolean value, such as \texttt{\small =}, by  
an operator returning a real, such as \texttt{\small *}; neither should we substitute
an operator with a single argument, like \texttt{\small not}, by an operator 
of two or more arguments, such as \texttt{\small =}. Instead we mutate the operators 
in a \emph{type-aware} fashion. Consider the first 
\texttt{\small >} of the formula in Fig.~\ref{formula:original}. It takes an arbitrary   
number of numeral arguments and returns a boolean. Candidates for its 
substitution are \texttt{\small {<=, >=, <, =,}} and \texttt{\small{distinct}}, all of which have a conforming type, \ie, read more than 
one numerals and return a boolean. Therefore, we can safely substitute \texttt{\small >} 
of the formula in Fig.~\ref{formula:original} with a random candidate, \eg, \texttt{\small =}. 
As a result, we obtain the mutant formula in Fig.~\ref{formula:one}. This formula  
is syntactically correct and can successfully pass the preprocessing phase of the SMT solvers.
We call such mutations \emph{type-aware operator mutations}. 
As we have the guarantee that the mutant is a type-correct formula, 
we can do iterative type-aware operator mutations. Given the mutant formula in 
Fig.~\ref{formula:one}, we further substitute the second occurrence of \texttt{\small *} 
with \texttt{\small /} safely.
This yields the formula in Fig.~\ref{formula:two} which triggered a soundness bug in Z3. 
Division by zero terms are specified in the Real and Int theories of the SMT-LIB as the uninterpreted terms,
meaning that for a term \texttt{\small (/ t 0)} and arbitrary value \texttt{v}, the equation 
\texttt{\small (= v (/ t 0))} is satisfiable. In fact, we can set $a=0$ to realize a model 
for the formula in Fig.~\ref{formula:two}, \ie, let the division by zero terms be $1$ to satisfy the assert. 
Hence, the formula in Fig.~\ref{formula:two} is satisfiable. However, 
Z3 incorrectly reports \texttt{\small unsat} on it.

\section{Approach}
\label{sec:approach}
In this section, we formally introduce \approachname and propose \toolname, 
a fuzzer for stress-testing SMT solvers. %
{\parindent0pt \paragraph{\textbf{Background}}
We consider first-order logic formulas of the satisfiability modulo 
theories (SMT). 
Such a formula $\varphi$ is satisfiable if there is at least one assignment  
on its variables under which $\varphi$ evaluates to true. Otherwise, $\varphi$ is unsatisfiable.  
We consider formulas to be realized by SMT-LIB programs\footnote{We consider the SMT-LIB language in version 2.6.} 
in which operators correspond to functions of the SMT theories. 
We use the terms functions and operators interchangeably unless stated otherwise.
}

For a formula $\varphi$, we define $F(\varphi)$ to be $\varphi$'s set of (enumerated) function occurrences. 
For example, for $\varphi= \texttt{\small (+ (* 1 1) (- 2 (* 5 2)))}$, we have 
$F(\varphi) = \{\texttt{\small +}_1, \texttt{\small*}_1, \texttt{\small-}_1, \texttt{\small*}_2\}$. 
$\varphi[f_1/f_2]$ describes the substitution of function $f_2$ by $f_1$ in $\varphi$.   
Expressions and functions are typed. 
For example, \texttt{1} is of type \textit{Int}, 
\texttt{1.0} is of type \textit{Real}, \texttt{"foo"} is of type \textit{String}.
Similarly functions also have types.  We denote the type of a function $f$ by  $f: A \to B$ 
where $A$ is the type of its arguments and $B$ its return type. We use $\Gamma$  
to denote the static typing environment of the SMT-LIB language. For example, we 
write $\Gamma \vdash \textit{Int} \times \textit{Int} \to \textit{Int}$ for the  
type of function \texttt{mod} and $\Gamma \vdash \textit{Int} \times \cdots 
\times \textit{Int} \to \textit{Int}$ for the function \texttt{div}.
$\textit{Int} \times \cdots \times \textit{Int}$ means function \texttt{div} accepts more 
than one argument with type \textit{Int}.
Fig.~\ref{fig:op-type-table} shows selected functions and their types considered in this paper. 
We emphasize that our theory is not restricted to the functions used in Fig.~\ref{fig:op-type-table}. 
It can be extended to a richer set of functions and types according to the SMT-LIB standard.  
\typetable

Similar to other programming languages with types, we can define a subtyping relation for the SMT-LIB language. 
We now formalize a fragment of the SMT-LIB's type system.  
We define type \textit{Int} to be a subtype of \textit{Real}, \ie, $\Gamma \vdash Int <: Real$.
Let \textit{A} be an arbitrary type, then we define type $\textit{A} \times \textit{A}$ 
to be a subtype of $\textit{A} \times \cdots \times \textit{A}$, \ie, 
$\Gamma, A <: \top \vdash \textit{A} \times \textit{A} <: \textit{A} \times \cdots \times \textit{A}$.
For two functions $f_1:A_1 \to B_1$ and $f_2:A_2 \to B_2$ with $A_1 <: A_2$ and 
$B_2 <: B_1$ we have 
    $$ \small \dfrac{A_1 <: A_2 \quad B_2 <: B_1}{f_2: A_2 \to B_2 <: f_1: A_1 \to B_1}$$
For example, consider function $\texttt{div}$ of type $\textit{Int} \times \cdots \times \textit{Int} \to \textit{Int}$ 
and function $\texttt{mod}$ of type $ \textit{Int} \times \textit{Int} \to \textit{Int}$ 
from Fig.~\ref{fig:op-type-table}. 
We can hence conclude that $\texttt{\small div}$'s type is a subtype of $\texttt{\small mod}$'s type:
$$ \small \dfrac{\Gamma \vdash \textit{Int} \times \textit{Int} <: \textit{Int} \times \cdots \times \textit{Int}
\quad \Gamma \vdash \textit{Int} <: \textit{Int}}
{\Gamma \vdash \textit{Int} \times \cdots \times \textit{Int} \to \textit{Int} <: \textit{Int} \times \textit{Int} \to \textit{Int}}$$
We call a formula $\varphi$ \emph{well-typed} if it complies with the rules of SMT-LIB's typing system.

\subsection{Type-Aware Operator Mutation}
Having provided basic background, we present \approachname, the key concept of this paper.  
We first introduce type-aware operator mutations and then show that type-aware operator mutants  
realize well-typed SMT-LIB programs.
\begin{definition}[Type-aware operator mutation] 
    \label{def:type-aware-mutation}
    Let $\varphi$ be an SMT formula and let $f_1:t_1$ and $f_2:t_2$ be two of its functions.  
    We say formula $\varphi' =  \varphi[f_2/f_1]$ is a \textbf{\emph{type-aware operator mutant}} 
    of $\varphi$ if $t2 <: t1$.
    Transforming $\varphi$ to $\varphi[f_2/f_1]$ is called \textbf{\emph{type-aware operator mutation}}.
\end{definition}
\begin{proposition}
    \label{well-typed-prop}
    Type-aware operator mutants are well-typed. 
\end{proposition}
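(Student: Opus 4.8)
The plan is a standard subject-reduction–style argument: induct on the typing derivation of the seed formula $\varphi$, isolate the single subterm whose head is the mutated occurrence $f_1$, re-type that subterm with $f_2$, and observe that its type can only shrink under $\subtype$, so the surrounding context still accepts it. Two preliminary facts are needed. First, $\subtype$ is a preorder: reflexivity is immediate, and transitivity follows by a routine induction over the three subtyping rules above (the base axiom $\textit{Int} \subtype \textit{Real}$, the variadic-tuple rule, and the function rule). Second, I will use the function rule in inverted form --- the only way to derive a subtyping between function types $A_2 \to B_2$ and $A_1 \to B_1$ is via that rule --- so from $t_2 \subtype t_1$ with $f_1 : A_1 \to B_1$ and $f_2 : A_2 \to B_2$ we obtain $A_1 \subtype A_2$ and $B_2 \subtype B_1$, i.e.\ $f_2$ accepts (at least) everything $f_1$ accepted and returns (at most) what $f_1$ returned.

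First I would make the mutation site precise. Because $F(\varphi)$ is a set of \emph{enumerated} occurrences, $\varphi[f_2/f_1]$ rewrites exactly one node: there is a unique subterm $e = f_1(e_1,\dots,e_n)$ of $\varphi$ rooted at that occurrence. Inverting the typing derivation of $\varphi$ at this node gives $\Gamma \vdash e_i : s_i$ for each $i$ with $s_1 \times \cdots \times s_n \subtype A_1$, and assigns $e$ the type $B_1$; moreover, wherever $e$ sits, the context requires some type $\tau$ with $B_1 \subtype \tau$ (an argument slot of the enclosing function, or $\textit{Bool}$ if $e$ is directly asserted --- in which case, since $\textit{Bool}$ has no proper subtype in this system, $B_1 = \textit{Bool}$).

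Next I would re-type the rewritten node and propagate upward. The arguments $e_1,\dots,e_n$ are untouched, hence still well-typed with the same $s_i$. From $A_1 \subtype A_2$ and transitivity, $s_1 \times \cdots \times s_n \subtype A_2$, so $\Gamma \vdash f_2(e_1,\dots,e_n) : B_2$ is derivable; and from $B_2 \subtype B_1 \subtype \tau$, the new node's type is still one the context accepts. To finish I would state and prove a \emph{replacement (narrowing) lemma}: if $\varphi$ is well-typed and a subterm occurring at required type $\tau$ is replaced by a well-typed term of type $\subtype \tau$, the result is well-typed. This is an easy induction on the typing derivation along the root-to-node path; every case is forced except a function-application node, where one argument's type is narrowed from $s_i$ to $s_i'$ with $s_i' \subtype s_i$ --- but $s_i$ was already $\subtype$ the declared slot type, so by transitivity $s_i'$ is too, and the application still type-checks. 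Composing the lemma with the re-typed node yields that $\varphi[f_2/f_1]$ is well-typed.

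I expect the only real friction to be bookkeeping around the variadic convention. The excerpt states the tuple rule only for the binary case $\textit{A} \times \textit{A} \subtype \textit{A} \times \cdots \times \textit{A}$, so I would first fix its general reading (a length-$n$ tuple is $\subtype$ a variadic type iff each component is) and record the corresponding inversion facts, then verify it meshes with the contravariant argument clause of the function rule: concretely, that when $f_1$ and $f_2$ differ in arity --- as with $\texttt{div}$ of type $\textit{Int} \times \cdots \times \textit{Int} \to \textit{Int}$ and $\texttt{mod}$ of type $\textit{Int} \times \textit{Int} \to \textit{Int}$ --- the actual $n$-ary argument tuple of $f_1$ in $\varphi$, being $\subtype A_1$, is still $\subtype A_2$. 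Once these tuple-subtyping lemmas are in place, nothing beyond routine case analysis remains.
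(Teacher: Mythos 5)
Your proposal is correct and follows essentially the same route as the paper's proof: from $t_2 \subtype t_1$ you invert the function subtyping rule to get contravariance on arguments and covariance on results, so the mutated occurrence still accepts the original arguments and its result is still accepted by the surrounding context. The paper states this in two informal sentences, whereas you make it precise with an explicit replacement/narrowing lemma, transitivity of $\subtype$, and the variadic-tuple bookkeeping --- a more rigorous rendering of the same argument, not a different one.
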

\begin{proof}
    Let $\varphi$ be a well-typed SMT formula and let $\varphi'$ be a type-aware operator   
    mutant of $\varphi$. According to Definition~\ref{def:type-aware-mutation} 
    we know that $\varphi' =  \varphi[f_2/f_1]$ where $f_1:t_1$ and $f_2:t_2$ are two 
    of $\varphi$'s functions. By Definition~\ref{def:type-aware-mutation}, we also know $t2 <: t1$. 
    This implies that all arguments of $f_1$ are also accepted by $f_2$  
    and all values returned by $f_2$ could be produced by $f_1$. 
    Thus, $f_2$ accepts all the inputs provided by $\varphi'$, and formula $\varphi'$ 
    accepts all the outputs of $f_2$. Therefore $\varphi'$ is well-typed.
\end{proof}

\begin{example}
Consider $\varphi=\texttt{\small (assert (= (mod 1 1) 1)}$ with 
$F(\varphi)= \{\texttt{\small mod},\texttt{\small =}\}$. We randomly 
pick function \texttt{\small mod} from $F$, substitute it with a function that 
has its subtype, \eg, function \texttt{\small div}. We get the following type-aware operator mutant 
$\varphi'= \texttt{\small (assert (= (div 1 1) 1)}$. As Proposition~\ref{well-typed-prop} shows, 
$\varphi'$ is guaranteed to be well-typed. Thus, we can use $\varphi'$ to test SMT solvers. 
\end{example}

\begin{figure*}[t]
    \begin{subfigure}{0.49\textwidth}
    \mainProcess
    \end{subfigure}
    \begin{subfigure}{0.49\textwidth}
    \mutateFunction
    \validateFunction
    \end{subfigure}
    \caption{Left: \toolname's main process.
    Right: Function $\texttt{type\_aware\_mutate}$ realizes type-aware operator mutations
    and function $\texttt{validate}$ differentially tests the SMT solvers $S_1$ and $S_2$.}
    \label{fig:op-fuzz-realization}
\end{figure*}

\subsection{\toolname}
We implemented \toolname, a type-aware operator mutation-based fuzzer, for stress-testing
SMT solvers. \toolname leverages type-aware operator mutation to generate test inputs and validates 
the results of the SMT solvers via differential testing, 
\ie, by comparing the results of two or more SMT solvers and reporting their inconsistencies.
Fig.~\ref{fig:op-fuzz-realization} presents the algorithm of \toolname.
\toolname takes a set of seed formulas $\setofFormulas$, 
two SMT solvers $\solverOne$, $\solverTwo$ and parameter $n$ as its input. 
\toolname \ collects bug triggers in the set $\triggers$ which is initialized to the empty set. 
The main process runs inside a while loop until an interrupt is detected, \eg, by the user 
or by a time or memory limit that is reached. We first choose a random formula $\varphi$ 
from the set of formulas $\setofFormulas$ for initialization. 
In the while loop, we then perform a type-aware mutation on $\varphi$ 
realized by the \texttt{type\_aware\_mutate} function. 
In the \texttt{type\_aware\_mutate} function, we first 
randomly pick a function $f_1$ from the set of functions in $\varphi$. 
Then, we randomly choose a function $f_2$ from the set of $f_1$'s subtypes. 
The subtype function is realized based on Fig.~\ref{fig:op-type-table}.
After we obtained $\varphi' = \varphi[f_2/f_1]$ by type-aware mutation on $\varphi$, we call 
the function \texttt{validate}. It tests two SMT solvers $S_1$ and $S_2$ via differential testing 
on the input formula $\varphi'$. First, it checks whether either of the solvers has 
produced an error on processing $\varphi'$, e.g., the SMT solver did not terminate successfully, 
threw out an error message. We distinguish two cases: either $\varphi'$ triggered 
an assertion violation or segmentation fault (crash), or a model validation error that occurs  
for solvers with model validation enabled (invalid model). In both cases, the function 
returns $\mathit{false}$.   
Otherwise, it checks whether the results of the solvers are different, and returns 
$\mathit{false}$ if so, else \texttt{validate} returns $\mathit{true}$ indicating 
that $\varphi'$ has not exposed a bug trigger in neither of the solvers $S_1$ and $S_2$. 
\toolname realizes an $n$-times repeated type-aware operator mutation on every seed formula.
The choice for parameter $n$ is arbitrary but an $n$ within $200$ and $400$ has worked  
well in practice. 

\toolname is very light-weight. 
We realized \toolname in a total of only $\numlinespythoncode$ lines of Python 3.7 code. 
\toolname \ can be run in parallel mode, which can significantly increase its throughput. 
Users can customize \toolname's command-line interface to test specific solvers and/or 
configurations. \toolname can be used with any SMT solver that takes SMT-LIB v2.6 
files as its input. We have implemented the type-aware operator mutations \emph{w.r.t.} 
the function symbols in Fig.~\ref{fig:op-type-table}.  

\section{Empirical Evaluation}
\label{sec:empirical-evaluation}
This section details our extensive evaluation with \toolname demonstrating the     
practical effectiveness of \approachname for testing SMT solvers. Between 
September 2019 and September 2020, we were running \toolname to stress-test the SMT solvers 
Z3~\cite{moura-bjoerner-tacas08} and CVC4~\cite{barrett-cav11}. We have chosen 
Z3 and CVC4, since they (1) both are popular and widely used in academia and industry,  
(2) support a rich set of logics, and (3) adopt an open-source development model. 
During our testing period, we have filed numerous bugs on the issue trackers of Z3 
and CVC4. This section describes the outcome of our efforts. 
{\parindent0pt \paragraph{\textbf{Result summary and highlights}} In summary, \toolname is unusually effective.
\begin{itemize}
    \setlength\itemsep{0.2em}
    \item \emph{Many confirmed bugs:} In one year, we have reported \reported bugs, 
    and \confirmed unique bugs in Z3 and CVC4 have been confirmed by the developers.        
    \item \emph{Many soundness bugs:} Among these, there were \soundnum soundness 
        bugs in Z3 and CVC4. Most notably, we have found \cvcfoursoundnum
        in CVC4.
    \item \emph{Most logics affected:} Our bug findings affect most SMT-LIB logics    
    including strings, (non-)linear integer and real arithmetic, bit-vectors, uninterpreted functions,   
    floating points, arrays, sets, sequences, horn, and combinations thereof. 
    \item \emph{Most bugs in default modes:} \withoutopts out of our confirmed \confirmed   
    bugs are in the default modes of the solvers, \ie, without additionally supplied    
    options. 
    %
\end{itemize}
}
\subsection{Evaluation Setup}

{\parindent0pt \paragraph{\textbf{Hardware setup and test seeds}}
We have run \toolname \ on an AMD Ryzen Threadripper 2990WX processor with 32 
cores and 32GB RAM on an Ubuntu 18.04 64-bit. 
As test seeds, we have mainly used the SMT-LIB benchmarks\cite{SMT-LIB}. 
We chose the SMT-LIB benchmarks as our test seeds since they make the largest collection of SMT 
formulas in the SMT-LIB 2.6 language.  These SMT-LIB benchmarks are also used in 
the SMTComp, the annual SMT solver competition.
Therefore, they are unlikely to trigger bugs in Z3 and CVC4 since they have already been run on them.
In addition to the SMT-LIB benchmarks, we used the regression test suites of Z3\cite{Z3test}
and CVC4\cite{CVC4test}. 
We show the seed formula counts categorized by logic and solving mode    
in the Appendix~\ref{appendix}.
We treated all seed files equally during fuzzing.  The effort spent on testing for a specific 
logic is therefore proportional to the number of its seed files within the overall seed set. 
Consequently, logics with a high seed count get tested more frequently as compared 
to others with a lower seed count.
We regularly ran Z3 and CVC4 on all seed files and excluded bug triggering seeds 
but have very rarely encountered any bug triggering seed formulas.
}%

{\parindent0pt \paragraph{\textbf{Tested options and features}}
We mainly focused our testing efforts on the default modes of the solvers. 
For CVC4, this includes enabling the options \texttt{\small --produce-models}, 
\texttt{\small --incremental} and \texttt{\small --strings-exp} as needed to 
support all test seed formulas. To detect invalid model bugs, we have supplied     
\texttt{\small --check-models} to CVC4 and \texttt{\small model.validate=true} to 
Z3. We consider these to be part of the default mode for the two solvers Z3 and CVC4  
if apart from these necessary options, no other options or tactics were used. 
Besides the default modes of Z3 and CVC4, we have considered many frequently used options      
and solver modes for Z3 and CVC4 of which we only detail a subset here. For Z3, 
we have stress-tested several tactics and several arithmetic solvers including
\texttt{\small smt.arith\_solver=x} with $\texttt{x} \in \{1,\cdots,6\}$.
We have also tested, among others, the string solver z3str3 by supplying 
\texttt{\small smt.string\_solver=z3str3}. In CVC4 we have tested, among many other options, 
syntax-guided synthesis procedure \cite{reynolds-et-al-cav2015} by specifying 
\texttt{\small --sygus-inference} and higher-order reasoning for uninterpreted 
functions by specifying \texttt{\small --uf-ho}.
}

{\parindent0pt \paragraph{\textbf{Bug types}} We have encountered many different kinds of bugs and issues     
while testing SMT solvers. We distinguish them by the following categories with  
two SMT solvers $S_1$ and $S_2$. 
\begin{itemize}
\item \emph{Soundness bug}: Formula $\varphi$ triggers a soundness bug   
    if solvers $S_1$ and $S_2$ both do not crash and give different satisfiabilities for $\varphi$.     
\item \emph{Invalid model bug}: Formula $\varphi$ triggers an invalid model bug if the model returned by the solver does not satisfy $\varphi$.
\item \emph{Crash bug}: Formula $\varphi$ triggers a crash bug if the solver throws out an assertion violation or a segmentation fault while solving $\varphi$.
\end{itemize}
\toolname \ detects soundness bug triggers by comparing the standard outputs of the solvers  
$\solverOne$ and $\solverTwo$. \toolname \ detects invalid model bug triggers by internal 
errors when using the SMT solver's model validation configuration. A crash bug trigger
is detected whenever a solver returns a non-zero exit and no timeout occurred.
}

{\parindent0pt\paragraph{\textbf{Bug trigger de-duplication}} \toolname collects bug triggers that may stem from the same underlying bug.  
Hence, we de-duplicated the bug triggers after each fuzzing run with OpFuzz to avoid 
duplicate bug reports on the GitHub issue trackers. 
Crash bugs are either assertion violations or segmentation faults. We de-duplicate assertion 
violations via the location information (file name and line number) printed on standard 
output/error. We de-duplicate segmentation faults by comparing their ASAN traces.
For soundness and invalid model bugs we used the following procedure. 
We first categorize the bug triggers by theory. We do this because bug triggers 
in different theories are likely to be unique bugs. 
Then, we select one bug trigger per theory at a time for reporting. If the bug was fixed, we checked the remaining bug triggering formulas of the same theory. 
If either one of them still triggered a bug in the solver, we repeat this process until none of them triggers a bug anymore. 
}

{\parindent0pt\paragraph{\textbf{Bug reduction}}
When a bug trigger is selected in the trigger de-duplication, we reduce the bug-triggering formula to a small enough size 
for reporting. We use C-Reduce~\citep{Regehr:2012:TRC:2254064.2254104}, a C code 
reduction tool which also works for the SMT-LIB language. 
%
}

\subsection{Evaluation Results}
Having defined the setup and bug types, we continue with the presentation of the 
evaluation results. The section is divided into three parts: (1) statistics on the bug findings by  
\toolname to assess its effectiveness, (2) coverage measurements of \toolname 
relative to the seed formulas (3) solver trace comparisons to gain further insights  
into the technique.
\begin{figure}[t]
    \begin{subfigure}[b]{0.30\textwidth}
        \centering
        \bugcounttable
        \caption{}
        \label{fig:bugcount-table}
    \end{subfigure}
    \hspace{0.5cm}
    \begin{subfigure}[b]{0.30\textwidth}
        \centering
        \bugtypetable
        \caption{\label{fig:bugtype-table}}
      \end{subfigure}
    \hspace{0.5cm}
    \begin{subfigure}[b]{0.3\textwidth}
        \centering
        \bugflagtable
        \caption{\label{fig:flag-table}}
    \end{subfigure} 
    \caption{a) Status of bugs found in Z3 and CVC4. b) Bug types among the confirmed 
    bugs. c) Number of options supplied to the solvers among the confirmed bugs.
    }
\end{figure}
{\parindent0pt\paragraph{\textbf{Bug findings}}
Fig.~\ref{fig:bugcount-table} shows the bug status counts. 
By "Reported", we refer to the unique bugs after bug trigger de-duplication that we posted on the 
GitHub issue trackers of the solvers; by "Confirmed", we refer to those posted bugs that were       
confirmed by the developers as unique bugs; by "Fixed", we refer to those posted bugs that were  
confirmed by the developers as unique bugs and addressed through at least one bug fixing commit; 
by "Duplicate", we refer to those bugs posted on GitHub that have been identified by the  
developers as duplicate to another bug report of ours or to a previously existing bug report; 
by "Won't fix", we refer to those posted bugs that were rejected by the developers, due to 
misconfigurations. 
}

We have reported a total of 
$\reported$ bugs on Z3's and CVC4's respective issue trackers. Among these, \confirmed 
unique bugs were confirmed and \fixed were fixed. Although we devoted equal testing effort to 
both solvers, we found more than twice as many bugs in Z3 as in CVC4. Previous 
approaches made similar observations~\cite{semantic-fusion}.
%
%
%
Fig.~\ref{fig:bugtype-table} shows the bug types. 
Among the bug types of the confirmed bugs, crash bugs were most frequent (\crashnum), 
followed by soundness bugs (\soundnum) and invalid model bugs (\invmodel). 
The type "Others" refers to all other unexpected behaviors in SMT solvers  
such as rejecting syntax-correct formulas, alarming invalid models when generating a valid model.
The large majority (\withoutopts out of \confirmed) of bugs found by \toolname were found 
in the default modes of the solvers, \ie, no additional options were supplied, some were  
found with one or two additional options enabled, and clearly less bugs with more than 
three options enabled (see Fig.~\ref{fig:flag-table}). 

We have also examined the distribution 
of logics among the confirmed bugs of Z3 and CVC4 (see Fig.~\ref{fig:zthreedistribution} 
and ~\ref{fig:cvcfourdistribution}). We observe that most soundness bugs in  
Z3 are in the string logics QF\_S (42), QF\_SLIA (14) and nonlinear logics    
NRA (20), QF\_NRA (12). Notably, there are also a number of soundness bugs in 
bitvectors QF\_BV (7) and linear real and integer arithmetic QF\_LRA (4), LRA (4),   
LIA (4). Similar to Z3, most soundness bugs in CVC4 are also in the string logic 
QF\_S (4) and nonlinear arithmetic QF\_NRA (3). Moreover, there are three soundness bugs   
in set logics. However, in difference to Z3 there are almost no soundness bugs in 
bitvectors and only a single soundness bug in the linear arithmetic QF\_UFLIA.     

%
{\scriptsize
\begin{figure}
    \begin{subfigure}[b]{0.40\textwidth}
        \centering
        \zthreelogic
        \caption{Z3
        \label{fig:zthreedistribution}}
    \end{subfigure}
    \hspace{0.6cm}
    \begin{subfigure}[b]{0.40\textwidth}
        \centering
        \cvcfourlogic
        \caption{CVC4
        \label{fig:cvcfourdistribution}}
    \end{subfigure}
    \caption{Logic distribution of the confirmed bugs: (S) soundness bugs, (I) 
     invalid model bugs, (C) crash bugs. "Uncategorized" refers to bugs that could 
     not be associated with either of SMT-LIB's logics.}
\end{figure}
}

\renewcommand\arraystretch{1.4}
{
\footnotesize 
\begin{figure*}[t]
\centering
\begin{tabular}{ccccccccccccc}
\toprule
&\multicolumn{5}{c}{\textbf{Z3}} & & \multicolumn{5}{c}{\textbf{CVC4}} \\ 
\cmidrule{2-6} \cmidrule{8-12}
&\emph{lines} && \emph{functions} && \emph{branches} && \emph{lines} && \emph{functions} && \emph{branches}\\
$\setofFormulas_{1000}$ &33.2\% && 36.2\% && 13.7\% &&  28.5\% && 47.1\% && 14.3\%\\
\texttt{Type-aware Mutation} &\textbf{33.5\%} && \textbf{36.4\%} &&\textbf{13.8\%} &&\textbf{28.8\%} &&\textbf{47.4\%} &&\textbf{14.4\%}\\
\bottomrule
\end{tabular}
\caption{Line, function and branch coverage achieved by the baseline $\setofFormulas_{1000}$ 
 versus \toolname on Z3 and CVC4's respective source codes.}
\label{coverage}
\end{figure*}
}

{\parindent0pt\paragraph{\textbf{Code coverage of OpFuzz's mutations}}
Code coverage is a reference for the sufficiency of software testing. This experiment 
aims to answer whether the mutants generated by \toolname can achieve higher coverage 
than the seed formulas. We randomly sampled 1000 formulas ($\setofFormulas_{1000}$) 
from all formulas that we used for stress-testing SMT solvers. 
We instantiated \toolname with $n = 300$, run \toolname on the seeds $\setofFormulas_{1000}$ 
and then measure the cumulative line/function/branch coverage over all formulas 
and runs.\footnote{This makes a total of 300k runs.} For all coverage measurements, we used 
Gcov\footnote{https://gcc.gnu.org/onlinedocs/gcc/Gcov.html} from the GCC suite.
}

The results show that \toolname 
increases the code coverage upon $\setofFormulas_{1000}$ (Fig.~\ref{coverage}). Z3 
and CVC4 have over 436K LoC and 238k LoC respectively, so that 0.1\% improvement 
already translate to hundreds of additionally covered lines. However, although 
noticeable, the coverage increments are not significant ($\leq$ 0.5\%). A partial 
explanation is that decision procedures of Z3 and CVC4 are highly recursive.    
This leads to many calls of the same functions with different arguments. 
Hence the difference in line/function/branch coverage achieved by different formulas of 
the same theory, may not be as significant. This experiment also provides further 
evidence that standard coverage metrics (\eg, statement and branch coverages), although 
useful, are insufficient for measuring the thoroughness of testing. 


\begin{figure}[t]
  \begin{subfigure}{0.49\textwidth}
  \centering
  \begin{lstlisting}[basicstyle=\scriptsize\ttfamily]
  ;phi 
  (declare-fun a () Real)
  (declare-fun b () Real)
  (assert (<@\colorbox{gray!40}{\texttt{<}}@> a 0))
  (assert (< b 0))
  (check-sat)
  \end{lstlisting}
  \caption{\hspace*{2.5cm} \label{trace:a}}
  \end{subfigure}
  \begin{subfigure}{0.49\textwidth}
  \centering
  \begin{lstlisting}[basicstyle=\scriptsize\ttfamily]
  ;phi_mutant
  (declare-fun a () Real)
  (declare-fun b () Real)
  (assert (<@\colorbox{gray!40}{\texttt{>}}@> a 0))
  (assert (< b 0))
  (check-sat)
  \end{lstlisting}
  \caption{\hspace*{2.5cm} \label{trace:b}}
  \end{subfigure}
  \vspace*{0.5cm}
  
  \begin{subfigure}{0.49\textwidth}
  \centering
  \begin{lstlisting}[basicstyle=\scriptsize\ttfamily]
  [mk-app] #23 a
  [mk-app] #24 Int
  [attach-meaning] #24 arith 0
  [mk-app] #25 to_real #24
  <@\colorbox{gray!40}{\texttt{[mk-app] \#26 < \#23 \#25}}@>
  [mk-app] #27 Real
  [attach-meaning] #27 arith 0
  [inst-discovered] theory-solving 0 
  arith# ; #25
  [mk-app] #28 = #25 #27
  [instance] 0 #28
  [attach-enode] #28 0
  \end{lstlisting}
  \caption{\hspace*{2.5cm}\label{trace:c}}
  \end{subfigure}
  \begin{subfigure}{0.49\textwidth}
  \centering
  \begin{lstlisting}[basicstyle=\scriptsize\ttfamily]
  [mk-app] #23 a
  [mk-app] #24 Int
  [attach-meaning] #24 arith 0
  [mk-app] #25 to_real #24
  <@\colorbox{gray!40}{\texttt{[mk-app] \#26 > \#23 \#25}}@>
  [mk-app] #27 Real
  [attach-meaning] #27 arith 0
  [inst-discovered] theory-solving 0 
  arith# ; #25
  [mk-app] #28 = #25 #27
  [instance] 0 #28
  [attach-enode] #28 0
  \end{lstlisting}
  \caption{\hspace*{2.5cm} \label{trace:d}}
  \end{subfigure}
  \vspace*{0.5cm}
  
  \begin{subfigure}{0.49\textwidth}
  \centering
  \begin{lstlisting}[basicstyle=\scriptsize\ttfamily]
  <@\colorbox{gray!40}{\texttt{TheoryEngine::assertFact}}@>
    <@\colorbox{gray!40}{\texttt{(not (>= b 0.0)) (0 left)}}@>
  <@\colorbox{gray!40}{\texttt{Theory<THEORY\_ARITH>::assertFact[1]}}@>
    <@\colorbox{gray!40}{\texttt{((not (>= a 0.0)), false)}}@>
  TheoryEngine::assertFact
    ((not (>= b 0.0)))
  Theory<THEORY_ARITH>::assertFact[1]
    ((not (>= b 0.0)), false)
  <@\colorbox{gray!40}{\texttt{Theory::get() =>}}@>
    <@\colorbox{gray!40}{\texttt{(not (>= a 0.0))(1 left)}}@>
  Theory::get() => 
    (not (>= b 0.0)) (0 left)
  \end{lstlisting}
  \caption{\hspace*{2.5cm}\label{trace:e}}
  \end{subfigure}  
  \begin{subfigure}{0.49\textwidth}
  \centering
  \begin{lstlisting}[basicstyle=\scriptsize\ttfamily]
  <@\colorbox{gray!40}{\texttt{TheoryEngine::assertFact}}@>
    <@\colorbox{gray!40}{\texttt{((not (>= (* (- 1.0) a) 0.0)))}}@>
  <@\colorbox{gray!40}{\texttt{Theory<THEORY\_ARITH>::assertFact[1]}}@>
    <@\colorbox{gray!40}{\texttt{((not (>= (* (- 1.0) a) 0.0)), false)}}@>
  TheoryEngine::assertFact
    ((not (>= b 0.0)))
  Theory<THEORY_ARITH>::assertFact[1]
    ((not (>= b 0.0)), false)
  <@\colorbox{gray!40}{\texttt{Theory::get() =>}}@>
    <@\colorbox{gray!40}{\texttt{(not (>= (* (- 1.0) a) 0.0)) (1 left)}}@>
  Theory::get() => 
    (not (>= b 0.0)) (0 left)
  \end{lstlisting}
  \caption{\hspace*{2.5cm}\label{trace:f}}
  \end{subfigure}
  \caption{Left column: (a) seed formula $\varphi$ (b) Z3 trace snippet of $\varphi$ 
  and (c) CVC4 trace snippet of $\varphi$. Right column: (d) type-aware operator mutant $\varphi_{\mathit{mutant}}$ 
  (e) Z3 trace snippet of $\varphi_{\mathit{mutant}}$ (f) CVC4 trace snippet of $\varphi_{\mathit{mutant}}$ 
  of $\varphi$. Differences in the execution traces snippets are shaded.
  \label{fig:formula-z3trace-cvc4trace}}
  \end{figure}

\begin{figure*}[t]
\zthreetrace

\cvcfourtrace
\caption{Average similarity of consecutively generated mutants (y-axis) per mutation step (x-axis).}
\label{trace}
\end{figure*}

{\parindent0pt\paragraph{\textbf{Execution trace comparison}}
Since code coverage could not thoroughly explain the effectiveness of \toolname, 
we also examine the internals of the solvers by investigating the similarity of 
their execution traces upon type-aware operator mutations. \emph{What is the relative 
similarity of the execution traces with respect to the seed?}  In the following experiment, 
we approach this question. In Z3 and CVC4, we can obtain an execution trace by 
setting the flags \texttt{\small TRACE=True} and \texttt{\small --trace-theory} respectively. 
Before describing this experiment, we first show the format of Z3's and CVC4's respective
traces via an example. Consider formula $\varphi$ and its type-aware operator mutation 
$\varphi_\mathit{mutant}$ (see Fig.~\ref{trace:a} and \ref{trace:d}).
Fig.~\ref{trace:c} and \ref{trace:e} shows Z3's and CVC4's traces on solving $\varphi$ respectively, 
Fig.~\ref{trace:d} and \ref{trace:f} show Z3's and CVC4's traces on solving $\varphi_\mathit{mutant}$ respectively. 
}

Having obtained an intuition of the execution traces, we now get to the actual 
experiment. Our aim is to measure the relative change in the execution traces      
of Z3 and CVC4. We therefore perform 40 mutation steps for every formula in $\setofFormulas_{1000}$ 
and record the execution trace triggered in each step. To quantify the similarity 
of two traces $t_1$ and $t_2$, we compute a metric $sim(t_1, t_2)$ with 
$$sim(t_1, t_2) =  \frac{2 \cdot |LCS(t_1, t_2)|}{\#lines(t_1) + \#lines(t_2)}$$ 
where $LCS(t_1, t_2)$ corresponds to the longest common subsequence of    
$t_1$ and $t_2$; $\#lines(t_1)$ and $\#lines(t_2)$ are the number of lines in 
$t_1$ and $t_2$ respectively. As an example, re-consider Fig.~\ref{fig:formula-z3trace-cvc4trace}. 
The differing lines of $\varphi$'s trace and $\varphi_{\mathit{mutant}}$'s trace are shaded.
$\varphi$'s Z3 trace and $\varphi_{\mathit{mutant}}$'s Z3 trace match in $10$ out of $11$   
lines and therefore their similarity score is $\frac{10}{11}$. For the trace pair 
of CVC4, the number of longest common subsquence is of length 3 and hence 
the similarity of Z3's trace is $\frac{1}{2}$. To compute the longest common subsequence,   
we used the \texttt{difflib}\footnote{https://docs.python.org/3/library/difflib.html} package from python's standard library.
Note that type-aware operator mutation may rename the AST node identifiers of Z3's trace.
Here, we under-approximate the similarity of Z3 traces by considering the 
identifier renaming as the change of the trace.

In our experiment, we fix the trace of the original formula to be $t_1$, and $t_2$  
corresponds to the trace triggered of the mutant. Fig.~\ref{trace} shows the 
similarity of the corresponding mutation step averaged over all formulas in  
$\setofFormulas_{1000}$. The results of Z3 and CVC4 consistently show that along 
with a gradual mutation step increase, the similarity between the traces triggered 
by the mutant and the original formula gradually decreases. The result indicates 
that \toolname can generate diverse test cases that trigger different execution 
traces via type-aware operator mutation.  

{\parindent0pt \paragraph{\textbf{Takeaways}}
We designed three quantitative evaluations to measure and gain an intuition 
about the effectiveness of \toolname. First, we observe that  \toolname can find 
a significant number of bugs in various logics, solver configurations, most of 
which are in default mode. Second, to understand why \toolname can find so many 
bugs, we designed a coverage evaluation. The evaluation result shows that \toolname 
can increase coverage, but the increment is minor. As the coverage evaluation 
did not answer why \toolname is effective, we further designed the third 
evaluation investigating the similarity of execution traces. The trace evaluation 
shows that \toolname can gradually change the execution traces of the solvers, 
which partially explains the effectiveness of \toolname.
}

\section{In-depth Bug Analysis}
\label{sec:bug-study}
This section presents an in-depth study on \toolname's bug findings 
in which we (1) quantify the fixing efforts for Z3's and CVC4's developers
(2) identify weak components in Z3 and CVC4 and (3) examine the file sizes of bug 
triggering SMT formulas. We summarize the insights gained and then    
present selected bug samples, examine their root causes, and the developer's fixes.    

{\parindent0pt\subsection{\textbf{Quantitative Analysis}}
We collected all GitHub bug reports that we have filed in our extensive evaluation   
of \toolname. This data serves as the basis for our analysis. We guide our analysis 
with three research questions.  
\paragraph{\textbf{RQ1: How much effort had the developers taken with fixing the bugs found by \toolname?}}
To approach this question, we consider two metrics: the files affected by a bug fix
and the number of lines of code (LoC) for fixing. If a bug causes many 
lines of code and/or file changes, this may indicate a high fixing effort necessary    
by the developers. To examine the LoC and file changes for the bugs found by \toolname,
we collected \zcommits bug fixing commits in Z3 and \cvccommits bug fixing commits in CVC4. 
We solely considered commits that could be one-to-one matched to their bug reports 
\ie, the commit log mentions the issue id of our bug reports and no other issue ids.   
}
\begin{figure}[t]
   \begin{subfigure}{0.60\textwidth}
    \commitsFilesZ
   \end{subfigure}
   \begin{subfigure}{0.39\textwidth}
    \commitsFilesCVC
   \end{subfigure}

    \caption{Distributions of the file changes for a single bug-fixing commit in 
             Z3 (left) and CVC4 (right).}
    \label{fig:filescommits}
\end{figure}
%
Fig.~\ref{fig:filescommits} shows the distributions of file changes for 
bug-fixing commits in Z3 (left) and CVC4 (right). 
We observe that, in both solvers, most bug-fixing commits change less than five files, 
and the single file fixes are the majority. However, a few commits affected many 
files. We have manually examined the right tail of the distribution. We specifically 
present the top-$2$ file changing commits in both Z3 and CVC4 individually to 
demonstrate exemplary reasons for major changes in Z3 and CVC4. 
We begin with Z3. The highest-ranked bug-fixing commit in Z3 triggered 65 changes.
The main part of this fix was in \texttt{\small "smt/theory\_bv.cpp"} which is the implementation of 
bit-vector logic and also serves as the low-level implementation for floating-point 
logic. The developers' fix resulted in many function name updates and added checkpoints 
and additional $64$ file changes. Another bug-fixing commit in Z3 that affected 
$55$ files is a crash. It was caused by an issue in Z3's abstract syntax tree. 
The core issue addressed by this fix was in \texttt{\small "ast/rewriter/rewriter\_def.h"} 
and \texttt{\small "ast/rewriter/th\_rewriter.cpp"}. 
Reorganizations of the assertion 
checks triggered additional $54$ file changes. 
In CVC4, the top-2 fixes with most file changes ($18$ and $13$) are both caused by the crash bugs affecting  
string operators. The first bug is due to the unsound variable elimination that triggered the assertion violation. 
The fix refactored the variable elimination with 295 LoC changed.
For fixing the second bug, the developers added support for the regex 
operators \texttt{\small re.loop} and \texttt{\small re.\^{}} that were recently added to the   
theory of strings.  As an intermediate conclusion 
we observe: Although a relatively high numbers of file changes indicate extensive 
revisions in the SMT solvers Z3 and CVC4, their root cause are often rather simple 
fixes such as updating  function names, adding assertions, \etc We therefore 
also investigate the LoC changes for each bug fixing commit. Many simple fixes, on 
the other hand, exhibit subtle missed corner cases. 


\begin{figure}[t]
        \begin{subfigure}{0.505\textwidth}
            \commitsLoCZ
        \end{subfigure}
        \begin{subfigure}{0.485\textwidth}
        \commitsLoCCVC
        \end{subfigure}
    \vspace{-0.2cm}
    \caption{Distributions of the LoC changed in one commit in Z3 (left) and CVC4 (right). \label{fig:LOCcommits}}
    \vspace{-0.4cm}
\end{figure}
Fig.~\ref{fig:LOCcommits} presents the distributions of the LoC changes for 
each bug-fixing commit. For both Z3 and CVC4, we observe that most commits have less 
than 100 LoC changes and many bugs fixes only involve a 0-10 LoC change. 
We have manually inspected all 0-10 LoC fixes and observed the majority       
of them are subtle corner cases. Again we examine top-2 commits 
for each solver. In Z3 these have 572 and 332 LoC changes respectively. 
The 572 LoC change commit is a fix for a soundness bug in string logic. 
It leads to an extensive change in the rewriter of the sequential solver.
The 481 LoC changes commit is a fix for a soundness bug in non-linear 
arithmetic logic. The fix was systematically revamping the decoupling of monomials
in non-linear arithmetic logic. 
For CVC4, the top-2 commits have 1162 and 588 LoC changes respectively. 
The 1162 LoC changes in CVC4 
commit fixes a crash bug by systematically removing the instantiation propagator 
infrastructure of CVC4. The developer commented that they will redesign this 
infrastructure in the future. The bugfix with 588 LoC changes is fixing a soundness 
bug which is labeled as "major" in CVC4's issue tracker. 
The bug is due to a buggy ad-hoc rewriter that was incorporated into CVC4's  
extended quantifier rewriting module. The fix deleted the previous buggy rewriting steps 
and re-implemented an alternative rewriter. Compared to the analysis of file changes, 
commits with high LoC have a stronger correlation with interesting and systematic fixes in the SMT solvers. 
On average, the bugs found by \toolname lead to \zlocchanges and \cvclocchanges 
LOC changes for each commit in Z3 and CVC4 respectively. 

\begin{figure}[t]
    \centering
    \begin{subfigure}{0.49\textwidth}
        \centering
        \filesCommitsZ
        \caption{}
        \label{fig:rankcommitsZ3}
    \end{subfigure}
    \begin{subfigure}{0.49\textwidth}
        \centering
        \filesChangesZ
        \caption{}
        \label{fig:rankchangesZ3}
    \end{subfigure}

    \vspace{0.6cm}

    \begin{subfigure}{0.49\textwidth}
        \centering
        \filesCommitsCVC
        \caption{}
        \label{fig:rankcommitsCVC4}
    \end{subfigure}
    \begin{subfigure}{0.49\textwidth}
        \centering
        \filesChangesCVC
        \caption{}
        \label{fig:rankchangesCVC4}
    \end{subfigure}
    \caption{Top-10  (a) files affected by bug fixing commits in Z3. (b) LoC 
    changes per file in Z3 (c) files affected by bug fixing commits in CVC4.
    (d) LoC changes per file in CVC4.\label{fig:ranking} 
    }
\end{figure}
{\parindent0pt\paragraph{\textbf{RQ2: Which parts/files of Z3's and CVC4's codebases are most affected by the fixes?}}
In this research question, we investigate the influence of \toolname's bug findings      
on the respective codebases of Z3 and CVC4. For this purpose, we use two metrics.  
First, the number of bug-fixing commits that changed a specific file $f$ 
in either Z3's or CVC4's codebase, \ie, in how many bug-fixing commits 
file $g$ was included. The second metric is the cumulative number of LoC 
changes for a file $f$ caused by fixes in either Z3's or CVC4's codebase.        
For each file $f$ we add up additions and deletions based on GitHub's changeset. 
}

In general, there are 103 files in CVC4 and 348 files in Z3 are affected by the fixes 
of our bugs. 
Fig.~\ref{fig:ranking} shows a top-$10$ ranking of files in Z3's (top row) 
and CVC4's codebase (bottom row) with respect to the two metrics. 
We observe that in both Fig.~\ref{fig:rankcommitsZ3} and Fig.~\ref{fig:rankchangesZ3}, most files belong to 
the \texttt{\small "smt"} directory which contains the core implementations of Z3. 
Strikingly, the files \texttt{\small "smt/theory\_seq.cpp"} (Z3's sequence and string solvers), 
\texttt{\small "smt/theory\_arith\_nl.h"} (Z3's non-linear arithmetic solver)
and \texttt{\small "smt/theory\_lra.cpp"} (Z3's linear arithmetic solver)
are ranked in the top-$6$ in both \#Commits and \#LoC changes rankings.
This suggests that many of \toolname bug findings lead to the fixes in the core components of Z3.
Besides files from the \texttt{\small "smt"} directory, the remaining files    
are mostly part of Z3's \texttt{\small"tactic"} and \texttt{\small "ast"} directories.
These contain the implementations of solver front-end and Z3's solving tactics.    
Note, several formulas rewriters related files such as files \texttt{\small "ast/rewriter/seq\_rewriter.cpp"}, 
\texttt{\small "ast/rewriter/rewriter\_def.h"} and \texttt{\small "tactic/ufbv/ufbv\_rewriter.cpp"} 
are also highly ranked in the top-10 files affected by the fixes.

We now turn our attention to CVC4. Consider the bottom row of Fig.~\ref{fig:ranking} that presents file and   
LoC rankings in CVC4. The files that are related to quantifiers (under the path \texttt{\small "theory/quantifiers"})
are the majority in both rankings.
Besides, the files \texttt{\small "nonlinear\_extension.cpp"}, \texttt{\small "theory\_strings.cpp"}
and \texttt{\small "quantifiers\_rewriter.cpp"}
are listed in both rankings. The file \texttt{\small "nonlinear\_extension.cpp"} 
was the implementation of non-linear arithmetic solver, and a recent pull request 
moved the core of the non-linear arithmetic solver elsewhere. The file 
\texttt{\small "quantifiers\_rewriter.cpp"} contains the implementations of 
quantifier rewriters that caused soundness bugs, as RQ1 revealed. 
The file \texttt{\small "theory\_strings.cpp"} contains the decision procedures 
for string logic in CVC4.  Moreover the model generator of non-linear arithmetic 
(\texttt{\small "nl\_model.cpp"}) and the pre-processor (\texttt{\small "int\_to\_bv.cpp"}, 
\texttt{\small "unconstrained\_simplifier.cpp"}) are also heavily influenced by bug fixes.

\begin{figure}[t]
    \formulasize
    \vspace{-0.2cm}
    \caption{File-size distribution of reduced bug-triggering formulas.}
    \label{fig:size}
\vspace{-0.5cm}
\end{figure}
{\parindent0pt\paragraph{\textbf{RQ3: What is the file-size distribution of the bug-triggering formulas?}}
In this research question, we investigate the file-size distribution of reduced bug-triggering 
formulas. We collected the bug-triggering formulas from all confirmed and fixed Z3 and CVC4 bug reports we filed. 
Fig.~\ref{fig:size} presents the distribution of bug-triggering formulas 
collectively for Z3 and CVC4. According to Fig.~\ref{fig:size}, most formulas 
have less than 600 bytes, while the range of 100-200 bytes has the highest formula count. 
Among all bugs-triggering formula we reported, there are three formulas to have more than 
10,000 bytes \ie, 23,770 bytes, 19,473 bytes, and 10,562 bytes. All of these are 
in bit-vector logic. The formula with 23,770 bytes and 10 562 bytes triggered 
an invalid model bug and a crash bug respectively in Z3, and both of them take 
the developers a half month to fix.
The formula with 19 473 bytes triggered a crash bug in CVC4.
}

The top-3 smallest bug-triggering formulas have 21, 30, and 34 bytes 
respectively. The 21-byte formula is an invalid formula that triggers a crash bug 
in Z3.
The 30 bytes and 34 bytes formulas are a crash-triggering formula for Z3 and CVC4 respectively, 
both point to the corner cases. 
These three bugs were all fixed promptly, i.e., in less than one day, which is significantly
faster than the bugs triggered by the top-3 largest formulas. In general, the average 
size of the bug-triggering formulas reported by us is 426 bytes, which is usually 
sufficiently small for the developers.   

\subsection{Insights}
{\parindent0pt\paragraph{\textbf{Insight 1: \toolname's Bugs Are of High Quality}}
RQ1 and RQ2 have shown that \toolname's bug findings have not only led to non-trivial file 
and LoC changes in both CVC4 and Z3, but also motivated the developers to reorganize and 
redesign some parts of the solvers. Systematic infrastructure changes such as the decoupling    
of the monomial instantiation propagator show this. Furthermore, \toolname's 
bugs affected core implementations of the SMT solvers Z3 and CVC4. As RQ2 presented, 
the \texttt{\small "smt"} directory in Z3 and \texttt{\small "theory"} directory in CVC4, 
solvers are among the most affected. Besides, the bugs also affected various 
pre-processors and rewriters components. Third, the bug-triggering formulas that \toolname 
could be reduced to reasonable sizes (cf. RQ3). 
}

{\parindent0pt\paragraph{\textbf{Insight 2: weak components in Z3 and CVC4}}
From the rankings in RQ2, we identify several "weak" components in Z3 and CVC4.
First, in both Z3 and CVC4, source files for the non-linear arithmetic solvers 
rank high. This indicates: decision procedures for non-linear arithmetic are among the  
weak components in SMT solvers. Apart from these, rewriters are weak 
components as well. Z3's \texttt{\small "rewriter\_def.h"}, \texttt{\small "ufbv\_rewriter.cpp"} and 
\texttt{\small "seq\_rewriter.cpp"} are among the top-10 in LoC changes.  
In CVC4, the quantifier rewriter \texttt{\small "quantifiers\_rewriter.cpp"} is ranked high 
(5th and 2nd in Fig.~\ref{fig:rankcommitsCVC4} and Fig.~\ref{fig:rankchangesCVC4} respectively).
In Z3, we identified the tactics to be a weak component. Among the filed bug reports, 
there are \ztactics including reports related to tactics. 
In Fig.~\ref{fig:ranking}, these are \texttt{\small "purify\_arith\_tactic.cpp"} 
and \texttt{\small "dom\_simplify\_tactic.cpp"} which are ranked 9th or 10th in both 
Fig.~\ref{fig:rankcommitsZ3} and Fig.~\ref{fig:rankchangesZ3}.
}
 



{\parindent0pt\paragraph{\textbf{Insight 3: bugs found by \toolname can usually be reduced to small-sized 
formulas but bug reduction can be challenging}}
As we have observed (c.f. Fig.~\ref{fig:size}), 90\% of all bugs found by \toolname
are triggered by formulas of less than 600 bytes. Small-sized formulas facilitate   
the bug fixing efforts significantly. As we observed in RQ3, the  
top-$3$ largest formulas took the developers around half a month while the top-$3$ smallest 
formulas have been fixed very fast, usually within just a few hours. However, 
reducing SMT formulas to such small sizes can be challenging.    
ddSMT~\cite{NiemetzBiere-SMT13} is the only existing specialized SMT formula 
reducer for that purpose which does however not fully support the SMT-LIB 2.6 standard 
and formulas in string logic. We therefore preferred C-Reduce, a C code reducer 
to reduce SMT formulas. While creduce worked well in practice, bug 
reduction is often challenging especially if the time for solving the formula is high. 
}

\subsection{Assorted Bug Samples}
This subsection details multiple bug samples from our extensive bug hunting campaign  
of the SMT solvers Z3 and CVC4 and inspects the root causes. The bugs shown 
are reduced by C-Reduce, since the unreduced formulas are too large to be presented. 


\begin{figure}
%
\begin{subfigure}{0.45\textwidth}
\centering
\begin{lstlisting}[basicstyle=\scriptsize\ttfamily]
(declare-const a (_ BitVec 8)) 
(declare-const b (_ BitVec 8)) 
(declare-const c (_ BitVec 8)) 
(assert (=  (bvxnor a b c) 
    (bvxnor (bvxnor a b) c)))
(check-sat) 
\end{lstlisting}
\vspace{-0.1cm} 
\caption{Soundness bug in Z3 caused by a logic in the handling of the ternary 
\texttt{bvxnor} operator.\\[0.1cm] 
\scriptsize \url{https://github.com/Z3Prover/z3/issues/2832}
\label{sample:a}}
\vspace{0.3cm}
\end{subfigure}
\hspace{0.8cm}
\begin{subfigure}{0.45\textwidth}
\centering
\begin{lstlisting}[basicstyle=\scriptsize\ttfamily]
(set-logic ALL)
(declare-fun x () Real)
(assert (< x 0))
(assert (not (= 
    (/ (sqrt x) (sqrt x)) x)))
(check-sat)
\end{lstlisting}
\vspace{-0.1cm} 
\caption{Soundness bug in CVC4 caused by an inadmissble reduction of the square 
root operator.
\\[0.1cm] \scriptsize \url{https://github.com/CVC4/CVC4/issues/3475} 
\label{sample:b}}
\vspace{0.3cm}
\end{subfigure}
\begin{subfigure}{0.45\textwidth}
\centering
\begin{lstlisting}[basicstyle=\scriptsize\ttfamily]
(declare-fun a () Int)
(declare-fun b (Int) Bool)
(assert (b 0)) (push)
(assert (distinct true
    (= a 0) (not (b 0))))
(check-sat)
\end{lstlisting}
\vspace{-0.1cm} 
\caption{Soundness bug in Z3 in the boolean rewriter handling the distinct       
operator. \\[0.1cm]
\scriptsize \url{https://github.com/Z3Prover/z3/issues/2830} 
\label{sample:c}}
\vspace{0.3cm}
\end{subfigure}
\hspace{0.8cm}
%
\begin{subfigure}{0.45\textwidth}
\centering
\begin{lstlisting}[basicstyle=\scriptsize\ttfamily]
(set-logic QF_AUFBVLIA)
(declare-fun a () Int)
(declare-fun b (Int) Int)
(assert (distinct (b a) 
    (b (b a))))
(check-sat)
\end{lstlisting}
\vspace{-0.1cm} 
    \caption{Soundness bug in CVC4 due to a variable re-use in a simplification. 
    \\[0.1cm] \scriptsize \url{https://github.com/CVC4/CVC4/issues/4469}   
\label{sample:d}}
\vspace{0.3cm}
\end{subfigure}

\begin{subfigure}{0.45\textwidth}
\centering
\begin{lstlisting}[basicstyle=\scriptsize\ttfamily]
(declare-fun a () String)
(declare-fun b () Int)
(assert (> b 0))
(assert (= (int.to.str b) 
    (str.++ "0" a)))
(check-sat)
\end{lstlisting}
\vspace{-0.1cm} 
\caption{Soundness bug in Z3 due to a missing axiom in the integer to string 
         conversion function. \\[0.1cm]
\scriptsize \url{https://github.com/Z3Prover/z3/issues/2721}
\label{sample:e}}
\vspace{0.3cm}
\end{subfigure}
\hspace{0.8cm}
\begin{subfigure}{0.45\textwidth}
\centering
\begin{lstlisting}[basicstyle=\scriptsize\ttfamily]
(declare-fun x () String)
(declare-fun y () String)
(assert (= (str.indexof x y 1) 
    (str.len x)))
(assert (str.contains x y))
(check-sat)
\end{lstlisting}
\vspace{-0.1cm} 
\caption{Soundness bug in CVC4 due to an invalid indexof range lemma. 
\\[0.1cm]  
\scriptsize \url{https://github.com/CVC4/CVC4/issues/3497}
\label{sample:f}}
\vspace{0.3cm}
\end{subfigure}

\begin{subfigure}{0.45\textwidth}
\begin{lstlisting}[basicstyle=\scriptsize\ttfamily]
(declare-fun a () Real)
(assert (forall ((b Real)) 
    (= (= a b) (= b 0))))
(check-sat-using qe)
\end{lstlisting}
\vspace{-0.1cm} 
\caption{Longstanding soundness bug in Z3's 
qe tactic (since version \texttt{4.8.5}). \\[0.1cm]
\scriptsize \url{https://github.com/Z3Prover/z3/issues/4175}
\label{sample:g}}
\vspace{0.3cm}
\end{subfigure}
\hspace{0.8cm}
\begin{subfigure}{0.45\textwidth}
\centering
\begin{lstlisting}[basicstyle=\scriptsize\ttfamily]
(declare-fun a () Real)
(assert (= (* 4 a a) 9))
(check-sat)
(get-model)
\end{lstlisting}
\vspace{-0.1cm} 
\caption{Invalid model bug in CVC4 caused by an incorrect implementation of the  
         square root.
    \\[0.1cm] \scriptsize \url{https://github.com/CVC4/CVC4/issues/3719}
\label{sample:h}}
\vspace{0.3cm}
\end{subfigure}

\begin{subfigure}{0.45\textwidth}
\centering
\vspace{0.9cm} 
\begin{lstlisting}[basicstyle=\scriptsize\ttfamily]
(declare-fun a () Int)
(declare-fun b () Real)
(declare-fun c () Real)
(assert (> a 0))
(assert (= (* (/ b b) c) 2.0))
(check-sat)
(check-sat)
(get-model)
\end{lstlisting}
\caption{Invalid model bug in Z3. For the second \texttt{\small check-sat} query, Z3 returns an invalid model.
\\[0.1cm] \scriptsize \url{https://github.com/Z3Prover/z3/issues/3118}
\label{sample:i}}
\end{subfigure}
\hspace{0.8cm}
\begin{subfigure}{0.45\textwidth}
\centering
\begin{lstlisting}[basicstyle=\scriptsize\ttfamily]
(declare-fun d () Int)
(declare-fun b () (Set Int))
(declare-fun c () (Set Int))
(declare-fun e () (Set Int))
(assert (subset b e))
(assert (= (card b) d))
(assert (= (card c) 0 (mod 0 d)))
(assert (> (card (setminus e 
    (intersection (intersection e b) 
    (setminus e c)))) 0))
(check-sat)
\end{lstlisting}
\vspace{-0.1cm}
\caption{Soundness bug in CVC4's set logic caused by an incorrectly implemented 
    cardinality rule. 
    \\[0.1cm] \scriptsize \url{https://github.com/CVC4/CVC4/issues/4391}
    \label{sample:j}
}
\end{subfigure}
\caption{Selected bug samples in Z3 and CVC4. \label{fig:bug-array-samples}}
\end{figure}
    %
    {\parindent0pt\paragraph{Fig.~\ref{sample:a}} shows a soundness bug in Z3's bit-vector logic.      
    The formula is clearly unsatisfiable as the nested \texttt{\small bvxnor} expression 
    equals the unnested \texttt{\small bvxnor} expression. However, Z3 reports 
    \texttt{\small unsat}  on it, which is incorrect.  
    The root cause for this bug  is an incorrect handling of the ternary 
    \texttt{\small bvxnor} in Z3's bitvector rewriter "\texttt{\small bv\_rewriter.cpp}".    
    The \texttt{\small bvxnor} was implemented as the negation of the \texttt{\small bvxor} operator.    
    This is correct in the binary case, however incorrect for the n-ary case.  
    To see this consider, \eg, $\texttt{\small (bvxnor a b c) = (bvxnor (bvxnor a b) c) = true} 
    \neq \texttt{\small (not (bvxor a b c)) = false}$ for $\texttt{\small a = b = c = true}$.
    In the fix, Z3's main developer recursively reduces n-ary case \texttt{\small bvxnor} expression 
    to the binary case. The fix lead to a 17 LoC change in \texttt{\small ast/rewriter/bv\_rewriter.cpp}.}
    %
    {\parindent0pt\paragraph{Fig.~\ref{sample:b}} shows a soundness bug in the implementation    
    of the symbolic square root in CVC4. The formula can be satisfied by assigning 
    an arbitrary negative real to variable \texttt{\small x}. CVC4 incorrectly 
    reported \texttt{\small unsat} on this formula. The root cause for this bug   
    is an inadmissible reduction of the square root expression $\texttt{\small (sqrt x)}$     
    to "choice real $\texttt{\small y}$ s.t. $\texttt{\small x} = \texttt{\small y} 
    \cdot \texttt{\small y}$".  For negative $\texttt{\small x}$, there is  
    no $\texttt{\small y}$  to satisfy the equation. However, square roots 
    of negative numbers are permitted by the SMT-LIB standard.  CVC4's developers fixed this bug by      
    interpreting square roots of negative numbers as an undefined value that  
    can be chosen arbitrarily. For the formula in Fig.~\ref{sample:b}, the term  
    $\texttt{\small (/ (sqrt x) (sqrt x))}$ can be arbitrarily chosen,  as the  
    second assert demands $\texttt{x}$ to be negative. Therefore,  the   
    formula in Fig.~\ref{sample:b} is satisfiable. The bug-fixing pull request 
    was labeled as "major" which reveals that this issue was of high importance 
    to the CVC4 developers. The fix lead to a 126 LoC change on 5 files. }

    %
    {\parindent0pt\paragraph{Fig.~\ref{sample:c}} is a soundness bugs in Z3. 
    Although the second assert is unsatisfiable (as \texttt{\small true} cannot be distinct 
    with \texttt{\small (not (b 0))}, Z3 reported \texttt{\small sat} on this formula.
    The bug is caused by a logic error in a loop condition of a rewriting rule   
    for the \texttt{\small distinct} operator. An incorrect index condition accidentally    
    skips the last argument in an n-ary distinct. The \texttt{\small push} command 
    is necessary for triggering the bug, as it actives 
    the rewriter for \texttt{\small distinct}. The developer has fixed this bug  
    by correcting the index condition.  Hence, his fix consisted of only two character 
    deletes in \texttt{\small ast/rewriter/bool\_rewriter.cpp}.}
    
    \vspace{-3pt}
    {\parindent0pt\paragraph{Fig.~\ref{sample:d}} shows a  soundness bug in CVC4's logic of uninterpreted functions
    In default mode, CVC4 incorrectly reports \texttt{\small unsat} on this satisfiable formula.
    If we disable unconstrained simplification (\texttt{\small --no-unconstrained-simp}), 
    CVC4 correctly reports \texttt{\small unsat}. The bug is caused by an unsound 
    variable reuse. Our bug report got a "major" label by CVC4's developers and was promptly fixed.
    The core fix consisted of three LoC deletions in file the unconstrained simplifier implementation 
    "\texttt{\small preprocessing/passes/unconstrained\_simplifier.cpp}".}
    
    \vspace{-3pt}
    {\parindent0pt\paragraph{Fig.~\ref{sample:e}} depicts a soundness bug in Z3's QF\_SLIA logic.  
    The formulas is unsatisfiable, since if assertion \texttt{\small b > 0} holds, there   
    does not exist an \texttt{a} starting with \texttt{"0"}. However, Z3 reports \texttt{sat} 
    on this formula. The developers fixed this issue by adding an axiom to the 
    \texttt{\small smt/theory\_seq.cpp} adding two additional LoC to this file.}

    \vspace{-3pt}
    {\parindent0pt\paragraph{Fig.~\ref{sample:f}} shows a soundness in CVC4's string logic.     
    The intuition behind this formula is the following. The index of string \texttt{\small y} 
    in \texttt{\small x} after position 1 should be equal to the length of string \texttt{\small x}. 
    Furthermore \texttt{\small x} should contain \texttt{\small y}. The formula can be satisfied by 
    setting \texttt{\small y} to the empty string and \texttt{\small x} to a string of length $1$. 
    However, CVC4 incorrectly reports \texttt{\small unsat}. The root cause  
    was a logic error in  \texttt{\small theory/strings/theory\_strings.cpp} 
    The developer's fix changed three characters in \texttt{\small theory/strings/theory\_strings.cpp}. 
    The fix is labelled as "major".}

    \vspace{-3pt}
    %
    {\parindent0pt\paragraph{Fig.~\ref{sample:g}} presents a long-standing soundness bug in Z3's \texttt{\small qe} tactic.
    It affects z3 release from version \texttt{\small 4.8.5} to \texttt{\small 4.8.7}. 
    The \texttt{\small qe} tactic is an equisatisfiable transform for eliminating  
    quantifiers. Hence, the satisfiability should not be changed by using the 
    \texttt{\small qe} tactic.  The formula is satisfiable by assigning \texttt{\small a} to 
    \texttt{\small 0}, while Z3's \texttt{\small qe} tactic reports \texttt{\small unsat}. 
    The bug has been confirmed by Z3's developers but has not been fixed yet.}

    \vspace{-3pt}
    %
    {\parindent0pt\paragraph{Fig.~\ref{sample:h}} shows an invalid model bug in CVC4. CVC4 correctly 
    reports \texttt{\small sat} but generates the model ${\{a \mapsto \frac{-9}{2}\}}$ 
    which does not satisfy the formula. The bug is caused in CVC4's implementation of the square root.   
    A logic error assigns the result of the square root to be the square root's argument.  
    The fix is labeled as "major" by the developers, and promptly fixed only with a two LoC change 
    in file \texttt{\small theory/arith/nl\_model.cpp}.}  

    \vspace{-3pt}
    {\parindent0pt\paragraph{Fig \ref{sample:i}} shows an invalid model bug in Z3. 
    The \texttt{\small (check-sat)} command appears twice in the formula. 
    This means that Z3 is queried twice for solving. Z3 reports \texttt{\small unknown}
    for the first query and \texttt{\small sat} for the second.  
    In the second query, Z3 gives the following invalid model   
    {\small $ \{a \mapsto 0, b \mapsto 0.0, c \mapsto 16.0, \frac{0}{0} \mapsto \frac{1}{8}\}$}
    violating \texttt{\small (> a 0)}. The developers fixed this bug through 
    three LoC changes in file \texttt{\small solver/tactic2solver.cpp}.} 
    %

    \vspace{-3pt}
    {\parindent0pt\paragraph{Fig.~\ref{sample:j}} 
    presents a soundness bug in CVC4's set logic.  
    CVC4 returns \texttt{unsat} on this satisfiable formula. The root cause is  
    an incorrectly implemented set cardinality rule in the cardinality extension   
    of CVC4. CVC4's set solver uses lemmas to guess the equalities for terms by 
    identifying cycles of terms $e_1= \cdots = e_2 = \cdots = e_2$. CVC4 has  
    incorrectly assumed that these cycles are loops and in that case would conclude      
    $e_1= \cdots = e_2$. However, the cycles could have a lasso form which was 
    triggered by our formula. The developers fixed this issue, included the 
    formula to CVC4's regression test suite and marked the pull request to be 
    critical for CVC4's 1.8 release.}
    The fix was labeled as "major" and made 9 LoC changes to 
    \texttt{\small theory/sets/cardinality\_extension.cpp}.   

\section{Related work} \label{related-work}
\label{sec:related-work}
\vspace{10pt}
{\parindent0pt\paragraph{\textbf{Testing SMT solvers.}}This paper is not the first work on testing SMT solvers. Roughly ten years ago, 
the fuzzing tool FuzzSMT \citep{brummayer2009fuzzing} has been proposed, which is based         
on differential testing and targeted bit-vector logic. FuzzSMT 
uses a grammar for generating the SMT formula. FuzzSMT totally found $16$ 
solver defects in five solvers, however, none in Z3. BtorMBT~\cite{niemetz2017model}
is a testing tool for Boolector~\cite{BiereBrummayer-TACAS09}, an SMT 
solver for the bit-vector theory. BtorMBT tests Boolector by generating 
random valid API call sequences. However, BtorMBT did not find any bugs in a real setting. 
}

The efforts of the SMT-LIB initiative \citep{SMTLIB} have resulted in formalized 
SMT theories and common input/output file formats. In addition, the yearly solver 
competition SMT-COMP \citep{SMT-COMP} heavily penalized solvers with soundness issues.
Consequently, SMT solvers have robustified and finding bugs in SMT solvers became 
more difficult. Researchers have hence targeted the less mature logics such as 
the recently proposed theory of strings. Blotsky \etal \citep{blotsky-et-al-cav18} 
proposed StringFuzz which focuses on performance issues in string logic. 
StringFuzz generates test cases in two ways, one is mutating and transforming 
the benchmarks, another one is randomly generating formulas from a grammar. 
StringFuzz found $2$ performance bugs and $1$ implementation bug in z3str3. 
Bugariu and M\"uller~\citep{bugariu2020automatically} proposed a formula synthesizer 
for String formulas that are by construction satisfiable or unsatisfiable. 
They showed that their approach can detect many 
existing bugs in String solvers and they found $5$ new soundness/incorrect model 
bugs in z3 and z3str3. However, it remained an open question whether automated 
testing tools could find bugs in theories except the unicode string theory in Z3 and CVC4.
Recently, semantic fusion \citep{semantic-fusion} has been proposed which is an approach 
to stress-test  SMT solvers by fusing formula pairs that are by construction either 
satisfiable or unsatisfiable. \citeauthor{semantic-fusion}'s tool YinYang found $39$ bugs in Z3 and $9$ in  
CVC4. STORM \cite{mansur-etal-arxiv2020}, another recent mutation-based 
SMT solver testing approach, found $27$ bugs in Z3, however none in CVC4.   
Another related approach is BanditFuzz~\cite{scott-et-al-cav20}, a reinforcement 
learning-based fuzzer to detect SMT solver performance issues.  

Compared to previous work, type-aware operator mutation is the simplest, 
while it has also demonstrated to be the most effective technique for testing SMT solvers. 
Type-aware operator mutations show a promising direction for testing SMT solvers      
which can benefit the whole community. For example, \toolname can be used for the 
solver developers to stress-test new features conveniently.   

{\parindent0pt\paragraph{\textbf{Testing program analyzers.}}
SMT solvers are fundamental tools for various program analyzers. 
Hence, bugs in SMT solvers may affect the reliability of program analyzers.
Especially because program analyzers have become mature for practical use in recent years,
ensuring the reliability of program analyzers is crucial~\citep{symex-transf:fse-ni-13}.
There are several works on program analyzer's robustness. 
For example, Zhang \etal~\citep{zhang2019finding} tested software model checkers via reachability queries,
Bugariu \etal~\citep{bugariu2018automatically} found soundness and precision bugs in 
numerical abstract domains, Qiu \etal~\citep{qiu2018analyzing} and Pauck \etal~\citep{pauck2018android}
found bugs in the analyzers of Android apps. 
Type-aware operator mutation contributes to testing program analyzers by finding  
bugs in SMT solvers. Differential testing based approaches have been effective in finding bugs in program analyzers.
For example, Klinger \etal~\citep{klinger2019differentially} and Kapus \etal~\citep{klinger2019differentially} proposed 
the approaches for testing software model checkers and symbolic executors respectively using differential testing,
Wu \etal~\citep{wu2013effective} tested alias analyzers by cross-checking the dynamic aliasing information.
In this work, \toolname leverages differential testing to detect soundness bugs in SMT solvers. 
}

{\parindent0pt\paragraph{\textbf{Mutation-based testing.}}
Type-aware operator mutation belongs to the family of mutation-based testing techniques. 
The closest work is skeletal program enumeration (SPE) \citep{zhang2017skeletal}, 
an approach for validating compilers. Similar to type-aware mutation testing, 
program skeletons are generated from a set of seed programs. The holes in these  
skeletons are then systematically filled by exhaustive enumeration.
However, unlike type-aware operator mutation, SPE focuses on program variables 
and not on functions. SPE provides relative guarantees with respect to the input 
seed programs. 
Type-aware operator mutation is also related to FuzzChick~\cite{lampropoulos-et-al-oopsla19},   
a coverage-guided fuzzer for Coq programs. FuzzChick generates test cases   
by semantic mutations at type-level. FuzzChick is aware of parameter 
types and generates new values for the parameters while preserving type-correctness. 
Type-aware operator mutation, on the other hand, is focusing on the operators' 
types and to generate highly diverse SMT formulas. 
}

Type-aware operator mutation also belongs to black-box fuzzing techniques.
The black-box fuzzing techniques, such as SYMFUZZ~\citep{cha2015program}, 
leverage user-provided seeds and generate new mutated inputs to uncover security issues.
Grey-box fuzzing enhances black-box fuzzing by code coverage guidance and
has been successfully applied to software testing recently.
AFL~\citep{afl} is a popular tool for binary grey-box fuzzing. 
Follow-up works, such as FairFuzz~\citep{lemieux2018fairfuzz} and Steelix~\citep{li2017steelix}, improved the performance of AFL on the binary level.
However, binary level fuzzing is ineffective on programs with highly structured  
inputs (\eg. PDF viewers, programming language engines \etc) because of the many syntactically    
invalid inputs being generated.  
To generate valid test inputs, grammar-aware grey-box fuzzers were proposed.  
AFLSmart~\citep{pham2019smart}, Superion~\citep{wang-et-al-icse2019} and Nautilus~\citep{aschermann2019nautilus} 
are general grammar-aware grey-box fuzzers targeting programming language engines.
They use code coverage to guide the grammar-aware mutations. 
As a key difference to type-aware operator mutation, 
they both need to fully parse the program and work on the abstract syntax tree level,
which may lead to a higher computational cost during fuzzing. 
Type-aware operator mutation, on the other hand, works on the token level and without
fully parsing the formula.

Besides general black-box and grey-box fuzzing, various domain-specific fuzzing approaches exist, e.g., 
for testing compilers~\citep{zhang2017skeletal, yang2011finding, le2014compiler, lidbury2015many, cummins2018compiler},
testing database management systems~\citep{jung2019apollo, sqlsmith, mishra2008generating, manuel2020decting},
and testing OS kernel~\citep{corina2017difuze,han2017imf,schumilo2017kafl}. 
Type-aware operator mutation is also a domain-specific fuzzing technique 
which is unusually effective for testing SMT solvers. 


\section{Conclusion}
\label{sec:conclusion}
We introduced \approachname, a simple and effective approach for stress-testing 
SMT solvers. We realized  \approachname in our testing tool \toolname in little 
more than $200$ LoC, supporting only the most basic operators of the SMT-LIB 
language. Despite this, \toolname found $\confirmed$ confirmed unique bugs (\fixed fixed) 
in Z3 and CVC4. These bug findings are highly diverse, ranging over various types, 
logics and solver configurations in both state-of-the-art SMT solvers. Among these ones, there were many critical bugs. Type-aware operator mutation has found many more bugs than previous 
approaches by a large margin. Our bug findings show that SMT solvers are not yet 
reliable enough, even the most popular and stable, such as Z3 and CVC4. Our highly practical tool 
\toolname can help SMT solver developers making their solvers more reliable. 
For future work, we want to explore the full potential of \approachname by invoking 
more sophisticated type-aware mutations. 

\begin{acks}
  We thank the anonymous SPLASH/OOPSLA reviewers for their valuable feedback. 
  Our special thanks go to the Z3 and CVC4 developers, especially Nikolaj Bj\o
  rner, Lev Nachmanson, Christoph M. Wintersteiger, Murphy Berzish, 
  Arie Gurfinkel, Andrew Reynolds, Andres N\"otzli, Haniel Barbosa, Clark Barrett, \emph{etc.},
  for useful information and
  addressing our bug reports. Chengyu Zhang was partially supported by the
  China Scholarship Council, NSFC Projects No.\ 61632005 and
  No.\ 61532019.
\end{acks}

\appendix
\section{Test Seed Formulas}
\label{appendix}
\renewcommand{\arraystretch}{1.0}

\begin{figure}[h]
    {\scriptsize
    \begin{subfigure}[t]{.4\textwidth}
        \seedtableOne
        \vspace{-1cm}
    \end{subfigure}
    \hspace{0.8cm}
    \begin{subfigure}[t]{.4\textwidth}
        \seedtableTwo
    \end{subfigure}
    }
\caption{Formula counts for the respective benchmarks sets. 
    Colum \#non-inc refers to the count of non-incremental SMT-LIB files,     
    colum \#inc refers to the count of incremental SMT-LIB files. 
    \emph{z3test} and \emph{cvc4regr} 
    refer to CVC4's and Z3's respective regression test suites.\label{fig:seedtable}
}
\end{figure}
\bibliographystyle{ACM-Reference-Format}
\bibliography{bibliography}
\end{document}